\newcommand{\sect}[1]{\textit{#1}.---\!} 
\newcommand{\bigsect}[1]{}
\newcommand{\qfi}{\mathcal{F}}
\newcommand{\curse}{\mathcal{C}}
\newcommand{\sfA}{\mathsf{A}}
\newcommand{\sfB}{\mathsf{B}}
\newcommand{\sfC}{\mathsf{C}}
\newcommand{\sfD}{\mathsf{D}}
\newcommand{\sfM}{\mathsf{M}}
\newcommand{\sfN}{\mathsf{N}}
\newcommand{\sfK}{\mathsf{K}}
\newcommand{\sfX}{\mathsf{X}}
\newcommand{\sfY}{\mathsf{Y}}
\newcommand{\sfZ}{\mathsf{Z}}
\newcommand{\sfO}{\mathsf{O}}
\newcommand{\sfR}{\mathsf{R}}
\newcommand{\sfI}{\mathsf{I}}
\newcommand{\sfP}{\mathsf{P}}
\newcommand{\est}[1]{\hat{#1}}
\newtheorem{thm}{Theorem}
\newtheorem{cor}{Corollary}
\begin{document}

\title{Generalization of Rayleigh's Curse on Parameter Estimation with Incoherent Sources}
\author{Lijun Peng}
\affiliation{Department of Physics, Hangzhou Dianzi University, Hangzhou 310018, China}

\author{Xiao-Ming Lu}
\email{lxm@hdu.edu.cn}
\homepage{http://xmlu.me}
\affiliation{Department of Physics, Hangzhou Dianzi University, Hangzhou 310018, China}

\begin{abstract}
The basic idea behind Rayleigh's criterion on resolving two incoherent optical point sources is that the overlap between the spatial modes from different sources would reduce the estimation precision for the locations of the sources, dubbed Rayleigh's curse.
We generalize the concept of Rayleigh's curse to the abstract problems of quantum parameter estimation with incoherent sources.
To manifest the effect of Rayleigh's curse on quantum parameter estimation, we define the curse matrix in terms of quantum Fisher information and introduce the global and local immunity to the curse accordingly.
We further derive the expression for the curse matrix and give the necessary and sufficient condition on the immunity to Rayleigh's curse.
For estimating the one-dimensional location parameters with a common initial state, we demonstrate that the global immunity to the curse on quantum Fisher information is impossible for more than two sources.
\end{abstract}

\maketitle

\section{Introduction}
Rayleigh's criterion on distinguishing two incoherent optical point sources determines a  characteristic distance, below which resolving two incoherent sources by imaging are thought to be difficulty~\cite{LordRayleigh1879}.  
The basic idea behind Rayleigh's criterion is that the overlap between the spot images would impede the resolution of the point sources.
As noted in the Feynman lectures on physics~\cite[Sec.~30–4]{Feynman1963}, Rayleigh's criterion on the resolution of point sources of lights is a rough idea.
Instead, a meticulous method of measuring the resolution power of incoherent point sources can be achieved by resorting to statistical approaches, e.g., parameter estimation~\cite{Ram2006,Chao2016} or hypothesis testing~\cite{Helstrom1973,Harris1964,Acuna1997,Shahram2006}.

One great advantage of the parameter estimation approach as well as the hypothesis testing approach is the availability of optimizing over quantum measurements~\cite{Tsang2016b,Tsang2019a,Lu2018} using quantum estimation and detection theory~\cite{Helstrom1976,Holevo1982}.
The quantum limit of the estimation precision for the separation of the two incoherent point sources has been revealed to be unexpectedly much higher than that of direct imaging in the sub-Rayleigh region~\cite{Tsang2016b}, indicating that the overlap between quantum states may not impede the estimation of the separation, as the conventional wisdom suggests.
Tsang {\it et al.} introduced Rayleigh's curse in Ref.~\cite{Tsang2016b} to dub the effect that ``the positions of two incoherent sources should become harder to estimate when their radiations overlap.'' 
and showed that the estimation precision of the separation between the two optical point sources is immune to Rayleigh's curse when optimizing quantum measurements. 
A lot of efforts has been devoted into further investigating and demonstrating the quantum superiority brought by optimizing measurements~\cite{Pirandola2018,Nair2016,Yang2016,Paur2016,Tham2017,Nair2016a,Lupo2016,Rehacek2017a,Rehacek2017,Ang2017,Chrostowski2017,Dutton2019,Tsang2019,Zhou2019,Tsang2019d,Rehacek2019,Bonsma-Fisher2019,Napoli2019,Len2020,Datta2020}.

To further understand and systematically study the effect of the overlap between quantum states from different sources on the precision of parameter estimation, we in this work generalize Rayleigh's curse to the abstract problems of parameter estimation with incoherent sources. 
We consider the influence of Rayleigh's curse on the quantum Fisher information (QFI) matrix~\cite{Helstrom1976,Holevo1982,Liu2020}, which plays the pivotal role in quantum parameter estimation.
To manifest Rayleigh's curse on the QFI, we define the curse matrix by comparing the real QFI matrix and that as if the quantum states from different sources are orthogonal.
We further introduce the global and local immunity to the QFI curse and derive the necessary and sufficient condition on the immunity of the QFI curse.
For estimating one-dimensional location parameters with a common initial state, we show that the global immunity is impossible for more than two incoherent sources.
Moreover, for two-source cases, we derives an analytic form of the curse matrix.

\section{Parameter Estimation with Incoherent Sources}

Let us start by supposing a set of incoherent sources, each emitting a signal that is transmitted to a common receiver (see Fig.~\ref{fig:incoheret} for an illustration).
The signals are received in the form of quantum states.
Assume that the quantum states received from each source are pure states and denote them by a set of state vectors \(\ket{\psi_j}\) for \(j=1,2,\ldots n\).
Even when quantum state emitted by different sources are orthogonal, the received states may be non-orthogonal, due to some physical processes during the transmission and reception, e.g., diffraction of light, noise effects, and filtering operations.
These processes are represented by an abstract channel in Fig.~\ref{fig:incoheret}.
The incoherence of the sources is manifested in the fact that the relative phases between different state vectors \(\ket{\psi_j}\) are totally random.
As a result, the quantum state of the receiver is described by the density operator in the form of
\begin{equation} \label{eq:model}
	\rho = \sum_{j=1}^n w_j \op{\psi_j},
\end{equation}
where the weight \(w_j\) is the prior probability of the system being in the state \(\ket{\psi_j}\).

In this work, we consider parameter estimation problems with incoherent sources.
We assume that the state vectors \(\ket{\psi_j}\) depend on a \(d\)-dimensional vector parameter \( \theta = (\theta_1,\theta_2,\ldots,\theta_d)\) and the weights \(w_j\)'s s are independent of \(\theta\). 
The value of \(\theta\) need be estimated by observing the quantum system possessed by the receiver. 
Particularly, we consider the cases where the number of incoherent sources is much less than the dimension of the Hilbert space for the underlying quantum system.
In other words, the density operators considered in this work are of low rank.
This includes the special case of infinite dimensional Hilbert space.

\begin{figure}[tb]
	\includegraphics[]{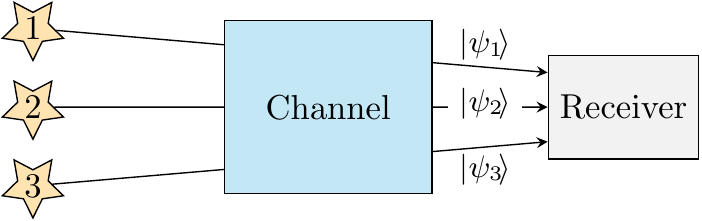}
	\caption{\label{fig:incoheret} Incoherent source model.}
\end{figure}

Quantum parameter estimation theory provides a powerful tool for pursuing the fundamental limits on the estimation error. 
For any unbiased estimator and any quantum measurement, the error-covariance matrix \(\mathcal E\) of parameter estimation, defined by \(\mathcal E_{jk} = \mathbb E[(\est\theta_j-\theta_j)(\est\theta_k-\theta_k)]\) with \(\hat\theta_j\) being the estimate for \(\theta_j\) and \(\mathbb E\) the expectation over the observations, obeys the quantum Cram\'er-Rao bound (QCRB) 
\begin{equation}
	\mathcal{E} \ge \qfi^{-1},
\end{equation} 
where \(\qfi\) is the QFI matrix~\cite{Helstrom1976,Holevo1982,Helstrom1967,Helstrom1968,Paris2009,Liu2020}.
The entries of the QFI matrix are defined by 
\begin{equation}
	\qfi_{\mu\nu} = \Re\tr(L_\mu L_\nu \rho), 	
\end{equation}
where the symmetric logarithmic derivative (SLD) operator \(L_\mu\) for \(\theta_\mu\) is an Hermitian operator satisfying 
\begin{equation}
	\frac12(L_\mu \rho + \rho L_\mu) = \partial_\mu \rho
\end{equation}
with \(\partial_\mu:=\partial/\partial\theta_\mu\) being defined for short.
Despite that the QCRB is not guaranteed to be attainable in general for the joint estimation of multiple parameters, the QFI matrix still reflects lots of information about the quantum limit of estimation errors~\cite{Belavkin1976,Lu2020b,Gross2020,Xing2020,Carollo2019a,Carollo2020,Miyazaki2020}.

\sect{Gauge symmetry}
Before calculating the QFI matrix, observe that, due to the incoherent characteristic of the model, the density operator in Eq.~\eqref{eq:model} is invariant under the (local) gauge transformations \(\ket{\psi_j} \mapsto e^{i\alpha_j} \ket{\psi_j}\) for \(j=1,2,\ldots,n\), where \(\alpha_j\) are arbitrary real-valued functions of \(\theta\).
The QFI matrix is solely determined by the state vectors \(\ket{\psi_j}\) and their derivatives \(\partial_\mu\ket{\psi_j}\);
The latter will have additional terms under the local gauge transformations, that is,
\begin{equation}
	\partial_\mu \ket{\psi} 
	\mapsto \partial_\mu e^{i\alpha} \ket{\psi}
	= e^{i\alpha} \partial_\mu\ket{\psi} + i (\partial_\mu\alpha) e^{i\alpha} \ket{\psi}
\end{equation}
and the inner products between the state vector and its derivative transform as 
\begin{equation}
	\ev{\partial_\mu}{\psi}
	\mapsto \ev{\partial_\mu}{\psi} + i (\partial_\mu\alpha).	
\end{equation}
To avoid the cumbersomeness caused by these additional terms, we use the covariant derivative \(\ket{\nabla_\mu\psi} := (\mathbf{1} - \op{\psi}) \partial_\mu \ket{\psi}\) instead of the ordinary derivatives for convenience. 
The covariant derivatives are always orthogonal to the state vectors and changes as \(\ket{\nabla_\mu\psi}\mapsto e^{i\alpha}\ket{\nabla_\mu\psi}\) under the gauge transformation \(\ket\psi\mapsto e^{i\alpha}\ket\psi\).
Moreover, it satisfies that 
\(\partial_\mu \op\psi = \op{\nabla_\mu\psi}{\psi} + \op{\psi}{\nabla_\mu\psi}\)
and 
\(\partial_\mu \ip\psi = \ip{\nabla_\mu\psi}{\psi} + \ip{\psi}{\nabla_\mu\psi}\).
The QFI matrix is totally determined by the mathematical structure of the state vectors, their covariant derivatives, and the prior probability of each state, see Fig.~\ref{fig:overlap} for an illustration.

\begin{figure}
	\includegraphics[]{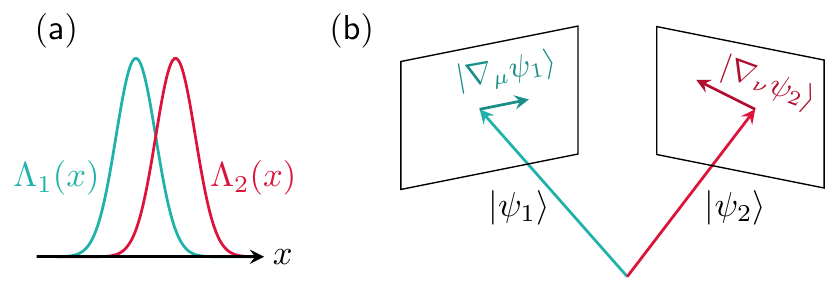}
	\caption{\label{fig:overlap}
		Classical and quantum incoherent-source models.
		(a) In classical model, the state from each source is a probability distribution.
		(b) In quantum model, the (pure) state from each source is represented by a state vector \(\ket{\psi_j}\) in a Hilbert space.
		The infinitesimal transformation of a state vector can be characterized by its covariant derivatives \(\ket{\nabla_\mu\psi_j}\) in the tangent plane.
		The state vectors, their covariant derivatives, and the prior probabilities of each state  determine the QFI matrix.
	}
\end{figure}

\section{Curse on the QFI}

Now, we say that a set \(\{ \ket{\psi_j} \}\) of state vectors satisfies \emph{the local orthogonalization condition} if 
\begin{equation} \label{eq:ortho_condition}
	\ip{\psi_j}{\psi_k} = \ip{\psi_j}{\nabla_\mu\psi_k} = 0
	\quad \forall\ j \neq k 
	\qand \forall\ \mu.
\end{equation}
In such case, it can be shown that \(L_\mu = 2 \sum_j \partial_\mu \op{\psi_j}\) is the SLD operator for \(\rho\) with respect to the parameter \(\theta_\mu\) and thus the entries of the QFI matrix are simply given by 
\begin{equation}\label{eq:preQFI}
	\widetilde\qfi_{\mu\nu} := 4 \sum_{j=1}^n w_j \Re\ip{\nabla_\mu\psi_j}{\nabla_\nu\psi_j}.	
\end{equation}
This QFI matrix can be understood as the average of the QFI matrix in each state vectors with the prior probability.

\sect{Curse matrix}
We shall focus on the cases where the local orthogonalization condition Eq.~\eqref{eq:ortho_condition} is not satisfied. 
The QFI matrix can be formally written as 
\begin{equation} \label{eq:cursed}
	\qfi = \widetilde\qfi - \curse,
\end{equation}
where \(\widetilde\qfi\) is the QFI matrix given by Eq.~\eqref{eq:preQFI} and \(\curse\) is a matrix manifesting the influence of the overlap between the state vectors on the QFI matrix. 
We call \(\curse\) the curse matrix, as it inclines to diminish the QFI when the local orthogonalization condition does not hold.

Assume that the state vectors \(\ket{\psi_j}\) and their non-vanishing derivatives \(\ket{\nabla_\mu\psi_j}\) are all linearly-independent and the prior weights \(w_j\) are all independent of \(\theta\). 
We show in the Appendix~\ref{app:proof} that the curse matrix for the QFI can be expressed as
\begin{align} \label{eq:curse_term}
	\curse_{\mu\nu} & = \Re \tr(\sfX_\mu \sfA \sfX_\nu),
\end{align}
where \(\sfA\) is the Gram matrix of \(\{\ket{\phi_j}\}\) with \(\ket{\phi_j} := \sqrt{w_j} \ket{\psi_j}\) being defined for convenience, i.e., \(\sfA_{jk}:=\ip{\phi_j}{\phi_k}\), and \(\sfX_\mu\) are \(n\times n\) Hermitian matrices determined by 
\begin{equation} \label{eq:sldlike}
	\frac12 (\sfA \sfX_\mu + \sfX_\mu \sfA) = \sfZ_\mu
\end{equation}
with \(\sfZ_\mu\) being the \(n\times n\) Hermitian matrix given by 
\begin{equation}\label{eq:Zmu}
	[\sfZ_\mu]_{jk} = i \ip{\phi_j}{\nabla_\mu \phi_k} - i\ip{\nabla_\mu \phi_j}{\phi_k}.
\end{equation}
Note that all diagonal entries of \(\sfZ_\mu\) are zero according to the above definition, for any state vector is orthogonal to its covariant derivative.
Due to Eq.~\eqref{eq:sldlike}, the entries of the curse matrix can also be written as
\begin{equation} \label{eq:curse_simple}
	\curse_{\mu\nu}=\tr(\sfX_\mu\sfZ_\nu), 	
\end{equation}
 which may be more convenient for practical calculation.

To understand and even dispel Rayleigh's curse on parameter estimation with incoherent sources, we need to further analyze the curse matrix.
Let \(v\) be a column vector in \(\mathbb R^d\) and represent a direction in the parameter space.
We interpret the quantity \(v^\top \curse v\) as the QFI curse along \(v\), where \(\top\) denotes matrix transpose. 
The meaning of the QFI curse along \(v\) can be understood in an intuitive way by resorting to the second form of the QCRB~\cite[Chapter VIII]{Helstrom1976}:
\(v^\top \mathcal E^{-1} v \leq v^\top \qfi v\) 
for any \(v\) in \(\mathbb R^d\).
Through a contour plot of \(v^\top \mathcal E^{-1} v\) as a function of \(v\), the second form of QCRB implies that, for each positive number \(c\), the concentration ellipsoid~\cite{Friendly2013} given by those \(v\) that are restricted by \(v^\top \mathcal E^{-1} v = c\) always lies outside the ellipsoid whose equation is \(v^\top \qfi v = c\).
See Fig.~\ref{fig:ellipsoid} for an illustration.

\begin{figure}[tb]
	\centering
	\includegraphics[]{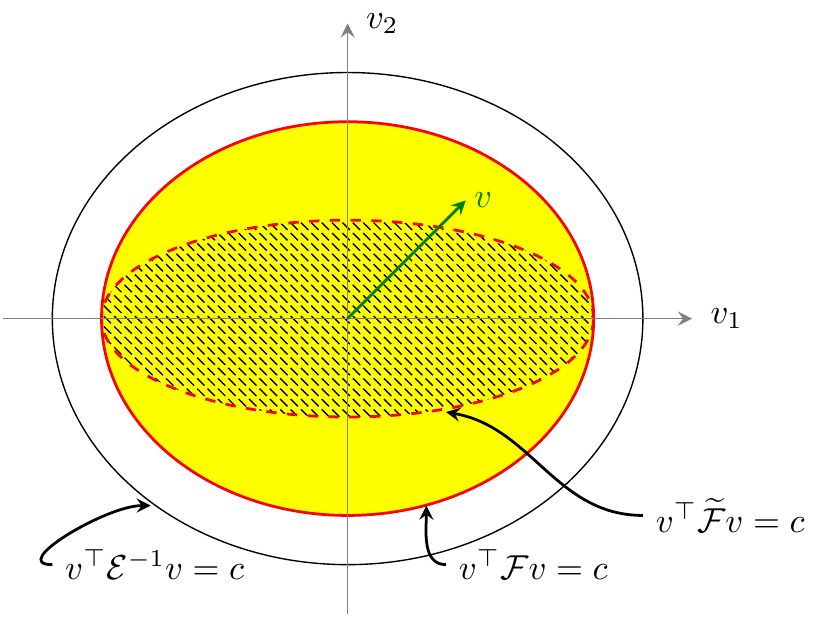}
	\caption{
		Geometric view of the QFI curse along a direction \(v\).
		Here, we consider two-parameter estimation as an example.
		The contour lines of \(v^\top \mathcal E^{-1} v\), \(v^\top \mathcal F v\), and \(v^\top \widetilde{\mathcal F} v\) are all ellipses. 
		For each positive number \(c\), the ellipse given by \(v^\top \mathcal E^{-1} v = c\) must lay outside that given by \(v^\top \mathcal F v=c\), which in turn must lay outside that given by \(v^\top \widetilde{\mathcal F} v=c\).
		The distance between the \(v^\top \widetilde{\mathcal F} v\) and \(v^\top {\mathcal F} v\) is caused by the curse on QFI along \(v\). 
	}
	\label{fig:ellipsoid}
\end{figure}

Note that \( v^\top \curse v = \trace(\sfX \sfA \sfX) \geq 0\) is satisfied for any \(v\), where we have defined \(\sfX:=\sum_\mu v_\mu \sfX_\mu\).
In other words, the curse matrix \(\curse\) is positive semidefinite.
This reflects the negative influence of the overlaps between state vectors on the QFI of the mixed states.
In fact, the positivity of the curse matrix can be understood from the perspective of the monotonicity of QFI under quantum operations~\cite{Petz1996,Hiai2014,Lu2015} as follows.
Let \(\tilde\rho = \sum_j w_j \op{\psi_j} \otimes \op{j}\) with \(\ket{j}\) being a set of orthonormal states in an ancillary Hilbert space and independent of \(\theta\).
It can be shown that the QFI matrix for \(\tilde\rho\) is the same of \(\tilde\qfi\) given by Eq.~\eqref{eq:preQFI}.
The incoherent-source state \(\rho\) can be obtained by performing on \(\tilde\rho\) the partial trace operation with respective to the ancilla, which is a quantum operation and thus cannot increase the QFI according to the monotonicity of QFI.
Therefore, the curse matrix \(\curse\) must be positive semidefinite.

\section{Immunity to the QFI curse}

When \(v^\top \curse v\) vanishes, we say that the QFI is \emph{immune} to Rayleigh's curse along the direction \(v\).
Moreover, we say that the immunity is \emph{global} if it holds for a direction \(v\) that is independent of the true value of the parameters of interest;
Otherwise, we say the immunity is \emph{local}.
Finding all vectors \(v\) such that \(v^\top \curse v\) vanishes is mathematically equivalent to seeking the kernel of the curse matrix.
We give the necessary and sufficient condition on the immunity to Rayleigh's curse along a direction in the following theorem.

\begin{thm}\label{thm:ns}
	The QFI is immune to Rayleigh's curse along a parameter direction \(v\), if and only if 
	\begin{equation} \label{eq:ns_condition}
		\ip{\psi_j}{D_v \psi_k} = \ip{D_v\psi_j}{\psi_k}  \quad\forall\, j<k
	\end{equation}
	are satisfied, where we \(\ket{D_v\psi} := \sum_\mu v_\mu \ket{\nabla_\mu\psi}\) is the directional derivative of a state vector \(\ket\psi\) along \(v\).
\end{thm}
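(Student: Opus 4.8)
The plan is to reduce the scalar immunity condition $v^{\top}\curse v = 0$ to a clean statement about the auxiliary $n\times n$ matrices, and then to the positive-definiteness of the Gram matrix $\sfA$. First I would collect the directional objects: set $\sfX := \sum_\mu v_\mu \sfX_\mu$ and $\sfZ := \sum_\mu v_\mu \sfZ_\mu$. Contracting Eq.~\eqref{eq:sldlike} with $v$ then gives $\tfrac12(\sfA\sfX + \sfX\sfA) = \sfZ$, while contracting Eq.~\eqref{eq:curse_simple} gives $v^{\top}\curse v = \tr(\sfX\sfZ)$, and Eq.~\eqref{eq:curse_term} likewise gives the real quadratic form $v^{\top}\curse v = \tr(\sfX\sfA\sfX)$ already noted in the text.

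Next I would compute the entries of $\sfZ$ explicitly. Using Eq.~\eqref{eq:Zmu} together with $\ket{\phi_j} = \sqrt{w_j}\ket{\psi_j}$ and the fact that the weights are independent of $\theta$, one finds
\begin{equation}
  [\sfZ]_{jk} = i\sqrt{w_j w_k}\bigl(\ip{\psi_j}{D_v\psi_k} - \ip{D_v\psi_j}{\psi_k}\bigr).
\end{equation}
Because every state vector is orthogonal to its own covariant derivative, the diagonal of $\sfZ$ vanishes, and since $\sfZ$ is Hermitian its lower triangle is determined by its upper triangle. Hence the condition Eq.~\eqref{eq:ns_condition}, stated for $j<k$ with the (positive) weights cancelling, is exactly equivalent to the single matrix identity $\sfZ = 0$. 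This identifies the right-hand side of the theorem with a purely algebraic statement.

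The core of the argument is then the equivalence $v^{\top}\curse v = 0 \iff \sfZ = 0$. For the nontrivial direction I would exploit that $\sfA$, being the Gram matrix of the linearly independent vectors $\ket{\phi_j}$, is positive definite; writing $\sfA = \sfB^2$ for its Hermitian positive-definite square root $\sfB$ and using that $\sfX$ is Hermitian yields $v^{\top}\curse v = \tr(\sfX\sfA\sfX) = \tr\bigl((\sfB\sfX)^{\dagger}(\sfB\sfX)\bigr) = \lVert \sfB\sfX \rVert^2$, a squared Frobenius norm. Thus $v^{\top}\curse v = 0$ forces $\sfB\sfX = 0$, and invertibility of $\sfB$ gives $\sfX = 0$; substituting $\sfX = 0$ into $\tfrac12(\sfA\sfX + \sfX\sfA) = \sfZ$ then yields $\sfZ = 0$. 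The converse is immediate, since $v^{\top}\curse v = \tr(\sfX\sfZ)$ vanishes once $\sfZ = 0$. Chaining these equivalences with the entrywise identification above establishes both implications of the theorem.

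I expect the only real subtlety, rather than an obstacle, to be the bookkeeping that upgrades the pointwise condition Eq.~\eqref{eq:ns_condition} to the matrix equation $\sfZ = 0$: one must invoke the vanishing diagonal (orthogonality of a state to its covariant derivative) together with Hermiticity of $\sfZ$ to promote the $j<k$ equalities to the full identity, and one must note that the positive-definiteness of $\sfA$, the ingredient doing all the analytic work, relies on all $w_j>0$, which is implicit in the linear-independence hypothesis. Everything else is direct substitution into Eqs.~\eqref{eq:sldlike} and~\eqref{eq:curse_simple}.
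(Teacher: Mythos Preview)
Your proposal is correct and follows essentially the same route as the paper: reduce Eq.~\eqref{eq:ns_condition} to $\sfZ=0$, write $v^\top\curse v=\tr(\sfX\sfA\sfX)=\lVert\sqrt{\sfA}\,\sfX\rVert^2$, and use the SLD-like relation $\tfrac12(\sfA\sfX+\sfX\sfA)=\sfZ$ for both directions. The only cosmetic difference is that you invoke invertibility of $\sqrt{\sfA}$ to pass through the intermediate conclusion $\sfX=0$, whereas the paper stops at $\sqrt{\sfA}\,\sfX=0$ and reads off $\sfZ=0$ directly from $\sfA\sfX=\sqrt{\sfA}(\sqrt{\sfA}\sfX)=0$ and its adjoint; either way works.
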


\begin{proof}
Let us define \(\sfZ = \sum_\mu v_\mu \sfZ_\mu\), where \(\sfZ_\mu\) is given by Eq.~\eqref{eq:Zmu}.
The condition Eq.~\eqref{eq:ns_condition} is equivalent to \(\sfZ = \sfO\), where \(\sfO\) denotes the zero matrix.
Notice that \(\sfZ = (\sfA \sfX + \sfX \sfA) / 2\) according to their definitions.
The sufficiency of Eq.~\eqref{eq:ns_condition} is then evident by observing that \(v^\top \curse v = \trace(\sfX \sfZ)\).
To prove the necessity of condition Eq.~\eqref{eq:ns_condition}, note that \(v^\top \curse v = \tr(\sfX \sfA \sfX)\) vanishes if and only if \(\sqrt{\sfA}\sfX\) is the zero matrix, for \(\tr(\sfX \sfA \sfX)\) is the Hilbert-Schmidt norm of \(\sqrt\sfA \sfX\).
In such case, \(\sfZ\) must be a zero matrix due to \(\sfZ = (\sfA \sfX + \sfX \sfA) / 2\).	
\end{proof}

\sect{Formula for the curse matrix}
To calculate the curse matrix, one needs to solve \(\sfX_\mu\) from Eq.~\eqref{eq:sldlike}.
This can be done in general by invoking the eigenvalue decomposition of \(\sfA\).
Let \(\sfP_q\) be the eigen-projection of \(\sfA\) with the eigenvalue \(\lambda_q\).
It then follows from Eq.~\eqref{eq:sldlike} that 
\begin{equation}
	\sfP_q \sfZ_\mu \sfP_{q'}  =\frac{\lambda_q + \lambda_{q'}}{2} \sfP_q \sfX_\mu \sfP_{q'},
\end{equation}
where we have used \(\sfA \sfP_{q'} = \lambda_{q'} \sfP_{q'}\) and \(\sfP_q \sfA = \lambda_q \sfP_q\).
Therefore, we get
\begin{equation}\label{eq:computable_curse}
	\sfX_\mu = \sum_{qq'} \sfP_q \sfX_\mu \sfP_{q'} 
	= \sum_{qq'}\frac{2}{\lambda_q + \lambda_{q'}} \sfP_q \sfZ_\mu \sfP_{q'}.
\end{equation}
As a result, the entries of the curse matrix can be given by 
\begin{equation} \label{seq:curse}
	\curse_{\mu\nu} = \tr(\sfX_\mu \sfZ_\nu) 
	= \sum_{qq'} \frac{2}{\lambda_q + \lambda_{q'}} \tr(\sfZ_\mu \sfP_q \sfZ_\nu \sfP_{q'}).
\end{equation}

For two-source problems, the QFI curse along \(v\) can be simply expressed as (see Appendix~\ref{app:twosource} for a detailed derivation) 
\begin{align}\label{eq:twosource_f}
	v^\top \curse v 
	&= 4 w_1 w_2 \qty[\abs{\gamma}^2 + \frac{(\Re\gamma\ip{\psi_2}{\psi_1})^2}{1 - \abs{\ip{\psi_2}{\psi_1}}^2}]
\end{align}
with \(\gamma := i \ip{\psi_1}{D_v \psi_2} - i \ip{D_v \psi_1}{\psi_2}\) being defined for brevity.
Without loss of generality, assume that \(\ip{\psi_2}{\psi_1}\) is real and nonnegative, which can always be implemented by changing the phases of \(\ket{\psi_1}\) and \(\ket{\psi_2}\), i.e., fixing a specific gauge.
Then, \(\ip{\psi_2}{\psi_1}\) equals the fidelity between the two state vectors and will be denoted by \(f\).
For two close sources such that the fidelity \(f\approx1\),  we have \(\gamma\approx0\), for \(f \to 1\) implies \(\ket{\psi_1} \to \ket{\psi_2}\).
To see the limit of Eq.~\eqref{eq:twosource_f} as \(f \to 1\), we introduce a normalized vector \(\ket\tau := (\ket{\psi_1} - \ket{\psi_2}) / \sqrt{2 - 2 f}\) so that \(\gamma\) can be written as
\begin{equation}\label{eq:gammatau}
	\gamma = i \sqrt{2-2f} \qty(\ip{\tau}{D_v \psi_2} + \ip{D_v \psi_1}{\tau}).
\end{equation}
Notice that \(\ket\tau\) is the unit vector representing the direction from the tip of \(\ket{\psi_2}\) to that of \(\ket{\psi_1}\) and thus is well defined under the limit \(\ket{\psi_1} \to \ket{\psi_2}\). 
Substituting Eq.~\eqref{eq:gammatau} into Eq.~\eqref{eq:twosource_f}, we get
\begin{align}
	v^\top \curse v &=  8 w_1 w_2 \qty[ 
		(1-f) \abs{\tilde\gamma}^2 + \frac{f^2 (\Re\tilde\gamma)^2}{1 + f}
	],
\end{align}
where \(\tilde\gamma := i\ip{\tau}{D_v\psi_2} + i\ip{D_v\psi_1}{\tau}\) is in general a complex number.
For two close sources such that the fidelity \(f\) is close to unit, we get
\begin{align}
	v^\top \curse v 
		\xrightarrow{f \to 1} 
		4 w_1 w_2 (\Re\tilde\gamma)^2.
\end{align}
This tell us that the local immunity to the QFI curse along a direction \(v\) can be achieved for two sources, if the quantity \(\Re\tilde\gamma\) goes to zero as the two state vectors become close.

\section{Unitary parameter estimation}

We here consider the cases where the state vectors are given by \(\ket{\psi_j} = U_j \ket{\psi}\), where \(U_j\) is a family of unitary operators depending on the parameter \(\theta\) of interest and \(\ket\psi\) is the initial state in common.
The covariant derivatives of the state vectors are given by
\begin{equation}
	\ket{\nabla_\mu\psi_j} 
	= \qty(\mathbf1 - \op{\psi_j}) \qty(\partial_\mu U_j) \ket\psi
	= - i G_{\mu,j} \ket{\psi_j},
\end{equation}
where we have defined 
\begin{equation}
	G_{\mu,j} := i(\partial_\mu U_j) U_j^\dagger - i \ev{(\partial_\mu U_j)U_j^\dagger }{\psi_j} \mathbf1
\end{equation}
with \(\mathbf1\) being the identity operator.
Note that the operator \(G_{\mu,j}\) is Hermitian due to \(\partial_\mu (U_j U_j^\dagger) = (\partial_\mu U_j) U_j^\dagger + U_j (\partial_\mu U_j)^\dagger = 0\) and satisfies \(\ev{G_{\mu,j}}{\psi_j} = 0\).
Therefore, we get 
\begin{align}
	 [\mathsf Z_\mu]_{jk} &= \mel{\psi_j}{\qty(G_{\mu,k} + G_{\mu,j}) }{\psi_k}.
\end{align}
The necessary and sufficient condition on the immunity to Rayleigh's curse along a direction \(v\), i.e., Eq.~\eqref{eq:ns_condition}, becomes
\begin{equation}\label{seq:immunity_unitary}
	\sum_\mu v_\mu \mel{\psi_j}{\qty(G_{\mu,k} + G_{\mu,j}) }{\psi_k} = 0
	\quad \forall\ j<k.
\end{equation}

We further assume that each state vector \(\ket{\psi_j}\) depends on an individual parameter \(\theta_j\) standing for the shifted location from a common initial state \(\ket{\psi}\) by a fixed unitary transformation, i.e., \(\ket{\psi_j} = \exp(-i\theta_j H) \ket{\psi}\) with \(H\) being an Hermitian operator and independent of \(\theta\).
This class of problems includes the superresolution for two incoherent optical point sources~\cite{Tsang2016b}, where the state \(\ket{\psi_j}\) is for a photon's spatial degree of freedom at the image plane from the \(j\)-th source. 
To be specific, assuming a spatially-invariant unit-magnification imaging system, the state vectors from each source are in the form of \(\ket{\psi_j} = e^{-i\theta_j H}\ket\psi\), where \(H\) is the momentum operator (\(\hbar\) is set to be unit) and \(\ket\psi\) is the state of a photon's spatial degree of freedom when the point optical source is located at the origin of coordinate.
In Ref.~\cite{Tsang2016b}, Tsang {\it et al.} showed that the QFI about the separation, i.e., \(\theta_1-\theta_2\), is independent of its true value, provided that the two sources are equally weighted (\(w_1 = w_2\)) and the normalized amplitude-point-spread function for the imaging system has a constant position-independent phase so that \(\ip{\psi_2}{\psi_1}\), \(\ip{\psi_2}{\nabla_1\psi_1}\), and \(\ip{\nabla_2\psi_2}{\psi_1}\) have the same phase.
In our context, it can be reinterpreted that the QFI has the global immunity to Rayleigh’s curse along \(v = (1 / \sqrt2, -1 / \sqrt2)\), for the directional derivative along \(v\) is equivalent to the partial derivative with respect to the separation \((\theta_1 - \theta_2) / \sqrt2\) when we use the rotated parameters \((\theta_1 + \theta_2) / \sqrt2\) and \((\theta_1 - \theta_2) / \sqrt2\) as the parameters of interest.
We will explain it in detail and generalize the results in what follows.

Without loss of generality, we assume that \(\ev{H}{\psi}=0\) henceforth; 
Otherwise, we can take the gauge transformation \(\ket{\psi_j} \mapsto \exp(i \theta_j \ev{H}{\psi}) \ket{\psi_j}\) and replace \(H\) by \(\Delta H := H - \ev{H}{\psi} \mathbf 1\).
In such case, we have \(\ket{\nabla_\mu\psi_j} = -i \delta_{\mu j} H \ket{\psi_j}\).
Then, The following corollary straightforwardly implied by Theorem~\ref{thm:ns}.

\begin{cor}
	For \(\rho = \sum_j w_j e^{-i\theta_j H} \op\psi e^{i\theta_j H}\), the QFI is immune to Rayleigh's curse along \(v\) if and only if
	\begin{equation}\label{eq:ns2}
		(v_j + v_k) \ev{H e^{i (\theta_j -\theta_k) H}}{\psi} = 0 \quad \forall\ j<k,
	\end{equation}	
	where \(\ev{H}{\psi}=0\) is assumed.
\end{cor}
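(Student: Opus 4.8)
The plan is to specialize the necessary and sufficient condition of Theorem~\ref{thm:ns} to the present model, in which the only substantive step is to evaluate the directional derivatives explicitly and then rewrite the resulting matrix elements in terms of the common initial state. First I would substitute the covariant derivatives $\ket{\nabla_\mu\psi_j} = -i\delta_{\mu j} H \ket{\psi_j}$, valid once the gauge $\ev{H}{\psi}=0$ is fixed, into the definition $\ket{D_v\psi_k} = \sum_\mu v_\mu \ket{\nabla_\mu\psi_k}$. Because of the Kronecker delta the sum collapses to a single term, giving $\ket{D_v\psi_k} = -i v_k H \ket{\psi_k}$ and, analogously, $\ket{D_v\psi_j} = -i v_j H \ket{\psi_j}$.

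Next I would insert these expressions into the two inner products appearing in the condition Eq.~\eqref{eq:ns_condition}. Using the Hermiticity of $H$ together with the reality of the components $v_j$, one finds $\ip{\psi_j}{D_v\psi_k} = -i v_k \mel{\psi_j}{H}{\psi_k}$ and $\ip{D_v\psi_j}{\psi_k} = +i v_j \mel{\psi_j}{H}{\psi_k}$. Equating these two quantities and collecting the common factor turns the immunity condition $\ip{\psi_j}{D_v\psi_k} = \ip{D_v\psi_j}{\psi_k}$ into $(v_j+v_k)\mel{\psi_j}{H}{\psi_k}=0$ for all $j<k$.

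The final step is to rewrite the matrix element $\mel{\psi_j}{H}{\psi_k}$ in terms of the common state $\ket\psi$. Writing $\ket{\psi_j}=e^{-i\theta_j H}\ket\psi$ and using that $H$ commutes with every function of itself, the factors $e^{i\theta_j H}$ and $e^{-i\theta_k H}$ combine into $e^{i(\theta_j-\theta_k)H}$, so that $\mel{\psi_j}{H}{\psi_k}=\ev{H e^{i(\theta_j-\theta_k)H}}{\psi}$. Substituting this back reproduces Eq.~\eqref{eq:ns2} and completes the argument in both directions at once, since every step above is an equivalence.

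I anticipate no genuine obstacle: as the surrounding text asserts, the corollary is straightforwardly implied by Theorem~\ref{thm:ns}, and the whole derivation is a direct substitution. The only point requiring a little care is the bookkeeping of signs and complex conjugation when passing from the ket $\ket{D_v\psi_j}$ to the bra in $\ip{D_v\psi_j}{\psi_k}$; getting this relative sign correct is precisely what produces the \emph{symmetric} prefactor $(v_j+v_k)$ rather than a difference, which is the characteristic feature of the stated condition.
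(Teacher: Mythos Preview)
Your proposal is correct and follows essentially the same route as the paper: the paper states the corollary as ``straightforwardly implied by Theorem~\ref{thm:ns}'' after recording the key ingredient \(\ket{\nabla_\mu\psi_j} = -i\delta_{\mu j}H\ket{\psi_j}\) under the gauge \(\ev{H}{\psi}=0\), and your argument is precisely the direct substitution of this into Eq.~\eqref{eq:ns_condition}. The only cosmetic difference is that the paper first passes through the general unitary form \(\sum_\mu v_\mu\mel{\psi_j}{(G_{\mu,k}+G_{\mu,j})}{\psi_k}=0\) before specializing, whereas you substitute directly; the content is identical.
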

Since the quantity \(\ev{H e^{i (\theta_j -\theta_k) H}}{\psi}\) in general depends on the true values of \(\theta\), the global immunity can only be achieved when \(v_j = -v_k\) for all different \(j\) and \(k\), which is only possible for the case of two sources.
For two-source cases, this global immunity along \(v\propto(1,-1)\) is irrelevant to the prior weights \(w_j\) and the specific form of \(H\) and \(\ket{\psi}\), implying that such a global immunity still holds for two unequal weighted sources and for point-spread functions with position-dependent phases.

The curse matrix for the model considered above can be obtained as follows.
By defining a characteristic function as \(g(x) := \ev{e^{-i x H}}{\psi}\) and noting that \(g'(x) = -i \ev{H e^{-i x H}}{\psi}\), it follows from Eq.~\eqref{eq:twosource_f} that 
\begin{align}
	v^\top \curse v 
	&= 4 w_1 w_2 (v_1 + v_2)^2 \Theta(\theta_2 - \theta_1) \nonumber \\
	&\mbox{with }
	\Theta(x) := \abs{g'(x)}^2 + \frac{[\Im g'(x) g(x)]^2}{1 - \abs{g(x)}^2}.
\end{align}
This form implies that the curse matrix can be simply written as 
\begin{equation}\label{eq:curse_matrix}
	\curse = 4 w_1 w_2 \Theta(\theta_2 - \theta_1) \mqty[1 & 1 \\ 1 & 1].
\end{equation}
When we rotate the parameters \(\theta_1\) and \(\theta_2\) to \(\theta'_1=(\theta_1+\theta_2)/\sqrt2\) and \(\theta'_2 = (\theta_1-\theta_2)/\sqrt2\), the curse matrix is diagonal as \(\curse = 8 w_1 w_2 \Theta(\theta_2 - \theta_1) \mathrm{diag}\{2,0\}\).
It is worth noting that for the case of locating two incoherent optical point sources, the QFI matrix regarding \(\theta'_1\) and \(\theta'_2\) is no longer diagonal when the two intensities are unequal~\cite{Rehacek2017}.
However, here, it can be seen that the curse matrix is still diagonal regarding \(\theta'_1\) and \(\theta'_2\).
We give a concrete example to illustrate this interesting phenomena.
Assume that the state vector \(\ket\psi\) is given by 
\begin{equation}
	\ket\psi = \int_{-\infty}^\infty 
	\qty(2\pi\sigma^2)^{-1/4} \exp[-\frac{x^2}{4\sigma^2}] \ket x \dd x,
\end{equation} 
where \(\sigma\) is a characteristic length, \(\ket x\) denotes the eigen-ket of the position operator, and the Hamiltonian is the momentum operator, i.e., \(H = - i\, \partial / \partial x\).
In such case, we have 
\begin{equation}\label{eq:example}
	\qfi = \frac1{\sigma^2}\mqty[w_1 & 0 \\ 0 & w_2] 
	+ \frac{w_1 w_2 s^2}{4 \sigma ^4} 
		\exp(-\frac{s^2}{4 \sigma ^2})
		\mqty[1 & 1 \\ 1 & 1].
\end{equation}
with \(s = \theta_2 - \theta_1\).
The first term in the right hand side of Eq.~\eqref{eq:example} is \(\tilde\qfi\) and the second one is the curse matrix \(\curse\).
For either \(s=0\) or \(s=\infty\), the curse matrix vanishes so that the QFI matrix is the same as \(\tilde\qfi\).
For an intermediate separation (\(s = 2\sigma\)), we in Fig.~\ref{fig:unequal_intensity} plot the QCRB ellipses given by \(v^\top \qfi v = 1 / \sigma^2\) and compare them with the case of \(s=\infty\).
The global immunity to the QFI curse along the direction \((1,-1)\) is manifested in the fact that the two ellipses in Fig.~\ref{fig:unequal_intensity} always touched along this direction( the gray dotted line), no matter the intensities of the two source are equal or notx.

\begin{figure}[tb]
	\raisebox{7cm}{(a)}
	\includegraphics[]{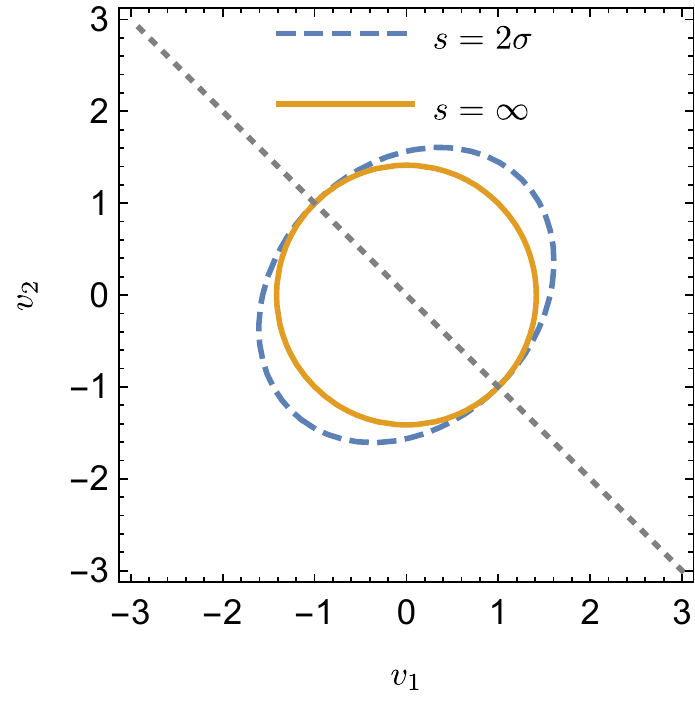}
	\par
	\raisebox{7cm}{(b)}
	\includegraphics[]{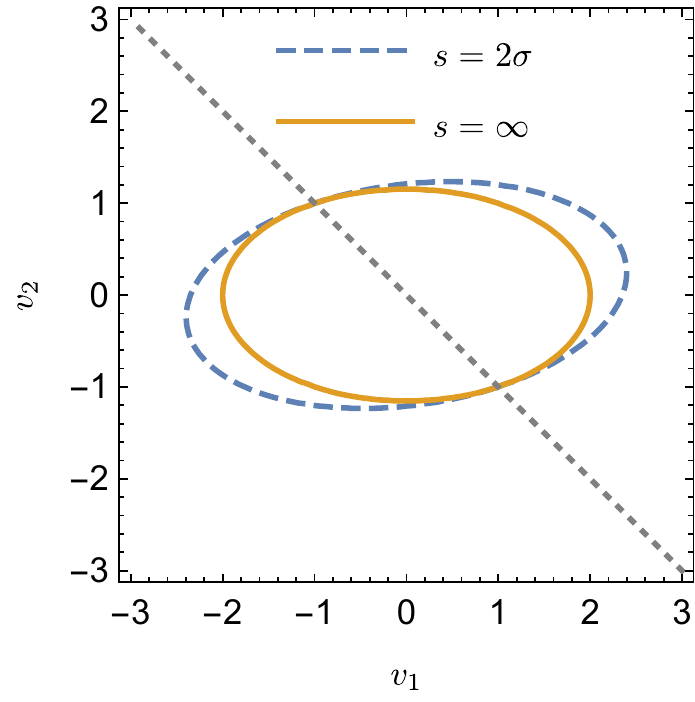}
	\caption{
		QCRB ellipses given by \(v^\top \qfi v = 1 / \sigma^2\).
		(a) We set the unequal intensities as \(w_1 = w_2 = 1/2\).
		(b) We set the unequal intensities as \(w_1 = 1/4\) and \(w_2 = 3/4\).
		The gray dotted line indicates the direction \(v=(1,-1)\).
	}
	\label{fig:unequal_intensity}
\end{figure}

For two close sources such that \(|\theta_2 - \theta_1| \ll 1\), we can expand the relevant quantities with respect to \(\theta_2 - \theta_1\) as follows:
\begin{align}
	g(\delta) &= 1  - \frac{\delta^2}{2} \ev{H^2}{\psi} + \mathcal O(\delta^3), \\
	g'(\delta)
 	&= - \delta \ev{H^2}{\psi} + \frac{i\delta^2}{2} \ev{H^3}{\psi} + \mathcal O(\delta^3),
\end{align}
where \(\delta\) is short for \(\theta_2 - \theta_1\) and we have used \(g'(0) = -i \ev{H}{\psi}=0\) and \(g''(0) = - \ev{H^2}{\psi}\).
Therefore, for small \(\delta\), we get 
\begin{equation} \label{eq:one_d_curse}
	\Theta(\delta) \approx \delta^2 \qty(
		\ev{H^2}{\psi}^2 + \frac{\ev{H^3}{\psi}^2}{4 \ev{H^2}{\psi}}
	).
\end{equation}
As \(\delta \to 0\), we can see that the local immunity of the QFI curse occurs along any direction \(v\).

In short, we have shown that, for the two-incoherent-source parameter estimation model \(\rho=\sum_{j=1,2} w_j \op{\psi_j}\) with \(\ket{\psi_j} = e^{-i\theta_j H} \ket\psi\), the QFI is globally immune along the direction \(v\propto(1,-1)\) and locally immune along all direction, to the Rayleigh's curse at the region of small \(\delta\).



\section{Discussion} 
To summarize, we have generalized the concept of Rayleigh's curse to quantum parameter estimation with incoherent sources. 
To manifest the influence of the overlap between quantum states from each sources on estimating the carried parameters, we have defined the curse matrix and introduced the global and local immunities to the QFI curse accordingly.
Using the technology of non-orthonormal basis, we have derived the computable formula for the curse matrix and also given a compact form of the curse matrix for the case of two incoherent sources.
We have applied the curse matrix to the estimation of one-dimensional location parameters.
We believe that our approach is useful to quantum metrology with incoherent sources.

In a recent work~\cite{Lupo2020a}, Lupo, Huang, and Kok used a fidelity approach to show that quantum limits to incoherent imaging of multiple point sources can be achieved by linear interferometry.
The fidelity approach can be used to obtain an analytic expression of the QFI matrix for the case of estimating the three-dimensional positions of two incoherent point sources in the paraxial regime. 
Yet, it is still not easy to analytically calculate the QFI matrix for more general cases with the fidelity approach.
Our general formula Eq.~\eqref{eq:computable_curse} for the curse matrix and its special form Eq.~\eqref{eq:twosource_f} for the two-source case may be used to investigate the estimation of the parameters that are imprinted in quantum states in more general ways than the unitary parameter.
Besides, Fiderer {\it et al}.~\cite{Fiderer2020} recently derived a general expression of the QFI matrix using the block-vectorization technology together with the non-orthogonal basis technology~\cite{Genoni2019,Napoli2019}, the latter also plays an important role in our present work.
All these new methods~\cite{Lupo2020a,Fiderer2020,Genoni2019,Napoli2019} enrich the toolbox for multiparameter estimation with incoherent sources, by providing approaches for calculating the QFI matrix of low-rank quantum states without resorting to the cumbersome procedure of Schmidt orthogonalization for an orthonormal basis to solve the SLD operator.

In this  work, we mainly focus on the condition on the vanishing of the QFI curse along a parameter direction.
It will be also interesting to investigate the condition on the maximal QFI curse along a parameter direction, which corresponds to the vanishing of QFI along a parameter direction. 
Bisketzi, Branford, and Datta have numerically demonstrated that in the regime of small separations, no more than two independent parameters can be effectively estimated, as the QFI matrix is no more than rank two~\cite{Bisketzi2019}.
Calculating the curse matrix for the same scenario as Ref.~\cite{Bisketzi2019} and investigating the condition on the maximal QFI curses will help to understand the above-mentioned phenomena.

Some important relevant aspects need to be considered for practical problems of multiparameter estimation with incoherent sources.
The curse matrix approach developed in this work is based on the QFI matrix, whose meaning is rooted in the QCRB.
However, the QCRB is not guaranteed to be attainable for multiparameter estimation due to Heisenberg's uncertainty principle.
This is known as the incompatibility problem in quantum multiparameter estimation.
There are some approaches that can be used to investigate the incompatibility problem, e.g., the weak compatibility condition~\cite{Matsumoto2002,Ragy2016}, the bounds on the discrepancy between the Holevo bounds and the QCRB~\cite{Carollo2019a}, and the tradeoff relation between the regrets of Fisher information~\cite{Lu2020b}.
Besides, another important question is how to construct the optimal measurement when the QFI immunity condition is satisfied.

\begin{acknowledgments}
We thank Chandan Datta, Stefano Pirandola, and Animesh Datta for the helpful communications and enlightening discussions. 
This work is supported by 
the National Natural Science Foundation of China (Grants No.~11805048, No.~61871162, and No.~11935012)
and Zhejiang Provincial Natural Science Foundation of China (Grant No.~LY18A050003).
\end{acknowledgments}

\appendix
\section{Derivation of the formula for the curse matrix}
\label{app:proof}
We here give a detailed proof of the formula Eq.~\eqref{eq:curse_term}--\eqref{eq:curse_simple} for the curse matrix.
Our approach is based on the use of non-orthonormal basis of the subspace supporting the underlying density operator and its partial derivatives with respect to the parameters of interest. 

\subsection{Non-orthonormal basis}

We follow Ref.~\cite{Genoni2019} to use non-orthogonal bases for expanding the relevant  operators involved in calculating QFI matrix.
Let \(\mathcal B = \{\ket{\phi_j} \mid j=1,2,\ldots,n\}\) be a set of linearly independent vectors spanning a Hilbert space \(\mathcal{H}_\mathrm{s}\).
Any bounded operator \(A\) acting on \(\mathcal H_\mathrm{s}\) can be uniquely expressed as 
\begin{equation}
	A = \sum_{jk} \sfR(A)_{jk} \op{\phi_j}{\phi_k},
\end{equation}
where \(\sfR(A)_{jk}\) are the coefficients of the expansion.
We call \(\sfR(A)\), the matrix with entries \(\sfR(A)_{jk}\), the \(\sfR\)-matrix for an operator \(A\) in the basis \(\mathcal B\).
In this work, we will use the following properties of the \(\sfR\)-matrices~\citep{Napoli2019,Genoni2019,Soriano2014}: 
\begin{enumerate}
	\item \(\sfR(A^\dagger) = \sfR(A)^\dagger\).
	\item \(\sfR(a A + b B) = a \sfR(A) + b \sfR(B)\) for any bounded linear operators \(A\) and \(B\) on the subspace spanned by \(\mathcal{B}\) and any two complex numbers \(a\) and \(b\).
	\item \(\tr A = \trace[\sfR(A) \Omega]\), where \(\Omega\) is the Gram matrix of \(\mathcal{B}\) and given by \(\Omega_{jk} := \ip{\phi_j}{\phi_k}\).
	\item \(\sfR(AB) = \sfR(A)\Omega\sfR(B)\).
\end{enumerate}
The first and second properties are obvious. 
The third property can be proved as
\begin{align}
	\tr(A) & =\sum_{jk} \sfR(A)_{jk} \tr(\op{\phi_j}{\phi_k})  \nonumber\\
	&=\sum_{jk}\sfR(A)_{jk}\ip{\phi_k}{\phi_j}
	=\tr[\sfR(A)\Omega].
\end{align}
The fourth property can be proved by noting that 
\begin{align}
	AB & =\sum_{jk} \sfR(A)_{jk} \op{\phi_j}{\phi_k}  \sum_{\ell m} \sfR(B)_{\ell m} \op{\phi_\ell}{\phi_m} \nonumber\\
 	&= \sum_{jm} \left[
 		\sum_{k\ell} \sfR(A)_{jk} \ip{\phi_k}{\phi_\ell} \sfR(B)_{\ell m}
 	\right] \op{\phi_j}{\phi_m}.
\end{align}


\subsection{General expression for the curse matrix}

To calculate the QFI of the incoherent-source density operator \(\rho=\sum_j w_j \op{\psi_j}\), Genoni and Tufarelli in Ref.~\cite{Genoni2019} used the non-orthogonal bases constituted by the state vectors \(\ket{\psi_j}\) and their derivatives \(\partial_\mu\ket{\psi_j}\) with respect to \(\theta_\mu\).
In this work, we improve their approach by using the weighted state vectors \(\sqrt{w_j}\ket{\psi_j}\) and their covariant derivatives \(\sqrt{w_j}\ket{\nabla_\mu \psi_j}\) instead to constitute the non-orthogonal basis.
We make the following assumptions on the incoherent-source density operators:
(i) \(w_j\) are independent of \(\theta\) and
(ii) all non-vanishing \(\ket{\nabla_\mu \psi_j}\) and \(\ket{\psi_j}\) are linearly independent.
Let us choose \(\ket{\phi_j} = \sqrt{w_j}\ket{\psi_j}\) for \(j=1,\,2,\,\ldots,\,n\) and take from the non-vanishing covariant derivatives as \(\ket{\phi_j}\) for \(j>n\).
It follows that \(\rho = \sum_{j=1}^n \op{\phi_j}\) and \(\partial_\mu\rho = \sum_{j=1}^n \op{\nabla_\mu \phi_j}{\phi_j} + \op{\phi_j}{\nabla_\mu \phi_j}\).
By partitioning the basis into \(\{\ket{\phi_j} \mid j=1,2,\ldots,n\}\) and the remainder, the \(\sfR\)-matrices for \(\rho\) and \(\partial_\mu\rho\) can be expressed in the  block form 
\begin{equation} \label{eq:block}
	\sfR(\rho) = 
		\mqty[\sfI & \sfO \\
		      \sfO & \sfO]
	\qand
	\sfR(\partial_\mu\rho) = 
		\mqty[\sfO & \sfD_\mu^\dagger \\
		      \sfD_\mu & \sfO],
\end{equation}
respectively, where \(\sfI\) is the identity matrix of size \(n\).

To calculate the QFI matrix, one usually need solve the equation \( (L_\mu \rho + \rho L_\mu) / 2 = \partial_\mu \rho\) for the SLD operator \(L_\mu\). 
We here give a method to calculate the QFI without explicitly solving \(L_\mu\).
One of the advantages of absorbing the weighting factors \(\sqrt{w_j}\) into the basis vectors is that \(\sfR(\rho)\) is idempotent, namely, \(\sfR(\rho)^2 = \sfR(\rho)\).
Using this property and the properties of the \(\sfR\)-matrix, we get
\begin{align}
	\mathcal Q_{\mu\nu} &:= \tr(L_\mu L_\nu \rho) \nonumber \\
	&= \tr[\sfR(L_\mu) \Omega \sfR(L_\nu) \Omega \sfR(\rho) \Omega] \nonumber \\
	&= \tr[\sfR(\rho) \Omega \sfR(L_\mu) \Omega \sfR(L_\nu) \Omega \sfR(\rho)] \nonumber \\
	&= \tr(\Psi_\mu^\dagger \Omega \Psi_\nu), 
\end{align}
where \(\Psi_\mu := \sfR(L_\mu \rho) = \sfR(L_\mu) \Omega \sfR(\rho)\) is defined.
Notice that the \(\mathcal Q_{\mu\nu}\) is known as the geometric tensor.
The real part of \(\mathcal Q\) the QFI matrix.
So knowing the matrices \(\Psi_\mu\) and \(\Omega\) is sufficient to get the QFI matrix. 
The imaginary part of \(\mathcal Q\) is the mean Uhlmman curvature up to a constant~\cite{Uhlmann2011,Carollo2018}, which is strongly related to the attainability of the QCRB for multiparameter estimation~\cite{Matsumoto2002,Ragy2016,Carollo2019a,Lu2020b}.

To proceed, we partition other relevant operators and the Gram matrix into the same block form as Eq.~\eqref{eq:block}:
\begin{align} \label{eq:block2}
	\Psi_\mu = \mqty[
		i \sfX_\mu & \sfO \\
		  \sfY_\mu & \sfO
	],\ 
	\sfR(L_\mu) = \mqty[
		\sfM_\mu & \sfN_\mu^\dagger\\
		\sfN_\mu & \sfK_\mu
	],\ 
	\Omega & = \mqty[
		\sfA & \sfB^\dagger \\
		\sfB & \sfC	
	],
\end{align}
where the imaginary unit before \(\sfX_\mu\) is introduced for later convenience, the all-zero blocks in \(\Psi_\mu\) is a result of \(\Psi_\mu = \sfR(L_\mu) \Omega \sfR(\rho)\) and the special form of \(\sfR(\rho)\) in Eq.~\eqref{eq:block}. 
Note that the square matrices \(\sfA\), \(\sfC\), \(\sfM_\mu\), and \(\sfK_\mu\) are all Hermitian.
With the \(\sfR\)-matrices, the SLD equation \( (L_\mu \rho + \rho L_\mu)/2 = \partial_\mu \rho\) is equivalent to
\((\Psi_\mu + \Psi_\mu^\dagger) / 2 = \sfR(\partial_\mu\rho)\).
Substituting the block form of \(\Psi_\mu\) in Eq.~\eqref{eq:block2} into this SLD equation and comparing block by block both sides, we get \(\sfX_\mu = \sfX_\mu^\dagger\) and \(\sfY_\mu = 2 \sfD_\mu\). 
Substituting the block forms of \(\Omega\) and \(\Psi_\mu\) into \(\mathcal Q_{\mu\nu} = \trace(\Psi_\mu^\dagger \Omega \Psi_\nu)\), we get
\begin{align} \label{eq:Q_block}
	\mathcal Q_{\mu\nu} &= \tr(
		{\mqty[
			-i \sfX_\mu  & 2\sfD_\mu^\dagger \\
			\sfO & \sfO
		]
		\mqty[
			\sfA & \sfB^{\dagger}\\
			\sfB & \sfC
		]
		\mqty[
			i \sfX_\nu & \sfO\\
			2 \sfD_\nu & \sfO
		]}) \nonumber \\	
 	& = \tr(
 			\sfX_\mu \sfA \sfX_\nu
 			+ 4 \sfD_\mu^\dagger \sfC \sfD_\nu
 			+ 2 i \sfD_\mu^\dagger \sfB \sfX_\nu
 			- 2 i \sfX_\mu \sfB^\dagger \sfD_\nu
 		). 
\end{align}
It follows from \(\qfi_{\mu\nu} = \Re \mathcal Q_{\mu\nu}\) that
\begin{align} \label{eq:qfi_block}
	 	\qfi_{\mu\nu} &= \Re\tr(
 			\sfX_\mu  \sfA \sfX_\nu
 			+ 4 \sfD_\mu^\dagger \sfC \sfD_\nu 			
 		)
 		- \tr(\sfZ_\mu \sfX_\nu + \sfX_\mu \sfZ_\nu), 
\end{align}
where we have defined 
\begin{equation}\label{eq:zmu_app}
	\sfZ_\mu := i (\sfB^\dagger \sfD_\mu - \sfD_\mu^\dagger \sfB). 
\end{equation}

To obtain the QFI matrix, we still need to solve the Hermitian matrix \(\sfX_\mu\).
We shall show in the following that \(\sfX_\mu\) is determined solely by \(\sfA\) and \(\sfZ_\mu\) via Eq.~\eqref{eq:sldlike}.
Substituting Eq.~\eqref{eq:block2} into \(\Psi_\mu = \sfR(L_\mu) \Omega \sfR(\rho)\) and comparing both sides block by block, we get
\begin{align}
	i \sfX_\mu  &= \sfM_\mu \sfA + \sfN_\mu^\dagger \sfB, \label{eq:from_sld1}\\
	2\sfD_\mu  &= \sfN_\mu \sfA + \sfK_\mu \sfB.  \label{eq:from_sld2}
\end{align}
Interestingly, the dependence of \(\sfX_\mu\) on the blocks of the \(\sfR\)-matrix for the SLD operator \(L_\mu\), i.e., \(\mathsf M_\mu\), \(\mathsf N_\mu\), and \(\mathsf K_\mu\), can be eliminated. 
This can be done as follows.
First, by noting that the matrices \(\mathsf M_\mu\), \(\mathsf K_\mu\), and \(\sfA\) are Hermitian, it follows from Eq.~\eqref{eq:from_sld1} and \eqref{eq:from_sld2} that 
\begin{align}
 	\sfA \sfX_\mu + \sfX_\mu \sfA
	&= -i \sfA \sfN_\mu^\dagger \sfB + i \sfB^\dagger \sfN_\mu \sfA, \label{eq:from_sld3} \\
	2i \sfB^\dagger \sfD_\mu - 2i \sfD_\mu^\dagger \sfB 
	&= i \sfB^\dagger \sfN_\mu \sfA - i\sfA \sfN_\mu^\dagger \sfB. \label{eq:from_sld4}
\end{align}
Therefore, we get
\begin{equation}\label{eq:sldlike_sm}
	\frac12(\sfA \sfX_\mu + \sfX_\mu \sfA) 
	= i \sfB^\dagger \sfD_\mu - i \sfD_\mu^\dagger \sfB 
	= \sfZ_\mu.
\end{equation}
This relation will play an important role in simplifying the expression of the QFI matrix.
Concretely, we have
\begin{align}
	\Re\tr(\sfX_\mu \sfA \sfX_\nu) & =\tr(\sfX_\mu \frac{\sfA \sfX_\nu + \sfX_\nu \sfA}{2}) \nonumber\\
	&= \tr(\sfX_\mu\sfZ_\nu)
\end{align}
and also \(\Re\trace(\sfX_\mu \sfA \sfX_\nu) = \trace(\sfX_\nu \sfZ_\mu)\) by interchanging the subscripts \(\mu\) and \(\nu\).
Therefore, Eq.~\eqref{eq:qfi_block} can be simplified to 
\begin{equation} \label{eq:qfi_dcdxax}
	\qfi_{\mu\nu} = 4\Re\tr(\sfD_\mu^\dagger \sfC \sfD_\nu) - \Re\tr(\sfX_\mu \sfA \sfX_\nu).
\end{equation}

The covariant derivatives \(\ket{\nabla_\nu\phi_j}\) have two indexes \(\nu\) and \(j\), so it is convenient to use \(\nu\) and \(j\) as a composite index for the matrices \(\sfB\), \(\sfC\), and \(\sfD_\mu\). 
The entries of \(\sfD_\mu\) can be denoted by \([\sfD_\mu]_{\nu j,k}\).
It follows from 
\begin{equation}
	\partial_\mu \rho = \sum_{j=1}^n \op{\nabla_\mu \phi_j}{\phi_j} + \op{\phi_j}{\nabla_\mu \phi_j}
\end{equation}
that 
\begin{equation}
	[\sfD_\mu]_{\nu j,k} = \delta_{\mu\nu} \delta_{jk}.
\end{equation}
Therefore, we have 
\begin{align}
	\tr(\sfD_\mu^\dagger \sfC\sfD_\nu) 
	&= \sum_{j,k,\ell=1}^n \sum_{\alpha,\beta=1}^d [\sfD_\mu^\dagger]_{j,\alpha k} \sfC_{\alpha k, \beta\ell} [\sfD_\nu]_{\beta \ell,j}  \nonumber\\
	&= \sum_{j,k,\ell=1}^n \sum_{\alpha,\beta=1}^d \delta_{\mu\alpha}\delta_{jk} \sfC_{\alpha k, \beta\ell} \delta_{\beta\nu} \delta_{\ell,j}  \nonumber\\
	&= \sum_{j=1}^n \sfC_{\mu j,\nu j} 
	=\sum_{j=1}^n \ip{\nabla_\mu \phi_j}{\nabla_\nu \phi_j} \label{eq:dcd}
\end{align}
and 
\begin{align}
	[\sfZ_\mu]_{jk}
	& = i \sum_{\alpha,\beta=1}^d [\sfB^\dagger]_{j, \alpha\ell} [\sfD_\mu]_{\alpha\ell, k}
	 - [\sfD_\mu^\dagger]_{j,\alpha\ell} \sfB_{\alpha\ell,k} \nonumber\\
 	& = i (\sfB_{\mu k,j}^*-\sfB_{\mu j,k}) \nonumber\\
 	& =i\ip{\phi_j}{\nabla_\mu\phi_k} - i\ip{\nabla_\mu \phi_j}{\phi_k}. \label{eq:Zmu2}
\end{align}
Notice that the last expression in Eq.~\eqref{eq:dcd} equals to \(\tilde\qfi_{\mu\nu}\) defined by Eq.~\eqref{eq:preQFI}.
So it follows from Eq.~\eqref{eq:qfi_dcdxax} that \(\curse_{\mu\nu} = \Re\tr(\sfX_\mu \sfA \sfX_\nu)\), which is Eq.~\eqref{eq:curse_term}.
The last expression in Eq.~\eqref{eq:Zmu2} is the same as Eq.~\eqref{eq:Zmu}.



\section{Two-source formula for the QFI curse}
\label{app:twosource}
We here give the proof of the two-source formula for the QFI curse, i.e., Eq.~\eqref{eq:twosource_f}.
Let us consider the 2-rank density operators, where the state vectors \(\ket{\psi_j}\) in the mixed states comes from two sources. 
For 2-rank density operators, the matrix \(\sfA\) can always be written as 
\begin{equation}
	\sfA = \frac12(\sfI + \vb{r} \cdot \vb*{\sigma}) = \frac12 \qty(\sfI + \sum_{j=1}^3 r_j \sigma_j),
\end{equation}
where \(\vb{r}=(r_1,r_2,r_3) \in \mathbb{R}^3\) and \(\vb*{\sigma} = (\sigma_1,\sigma_2,\sigma_3)\) is a vector of Pauli matrices. 
The eigenvalues of \(\sfA\) are given by
\begin{equation}
	\lambda_\pm = \frac{1 \pm r}{2} \mbox{ with }
	r := |\vb{r}| = \sqrt{r_1^2 + r_2^2 + r_3^2}.
\end{equation}
The corresponding eigen-projections are given by 
\begin{equation}\label{seq:P}
	\sfP_\pm = \frac12(\sfI \pm \vb{n} \cdot \vb*{\sigma}),
\end{equation}
where \(\vb{n}=\vb{r}/r\) is a unit vector. 
To calculate the curse matrix, we first need to obtain \(\sfZ_\mu\). 
For a given vector \(v\) in the parameter space \(\mathbb R^n\), define \(\sfZ = \sum_\mu v_\mu \sfZ_\mu\).
It then follows from Eq.~\eqref{seq:curse} that the curse along the direction \(v\) can be expressed as
\begin{align}\label{seq:s0}
	v^\top \curse v &= 2 \tr(\sfZ \sfP_+ \sfZ \sfP_- + \sfZ \sfP_- \sfZ \sfP_+) \nonumber\\
	& \quad + \frac1{\lambda_+}  \tr(\sfZ \sfP_+ \sfZ \sfP_+)
	+ \frac1{\lambda_-}  \tr(\sfZ \sfP_- \sfZ \sfP_-).
\end{align}
Since \(\sfZ_\mu\) are Hermitian and their diagonal entries are all equal to zero, we can always express \(\sfZ\) as \(\sfZ = \vb z \cdot \vb*\sigma\), where \(\vb z\) is a 3-dimensional real vector with the components \(z_j = \frac12 \sum_\mu v_\mu \tr(\sfZ_\mu \sigma_j)\). 
Using Eq.~\eqref{seq:P} and the algebraic properties of the Pauli matrices, it is easy to show that
\begin{align}\label{seq:s1}
	2\tr(\sfZ \sfP_+ \sfZ \sfP_- + \mathrm{H.c.})
	= \tr[\sfZ^2 - \sfZ (\vb n \cdot \vb*\sigma) \sfZ (\vb n \cdot \vb*\sigma)].
\end{align}
Due to the identities
\(	(\vb a \cdot \vb*\sigma ) (\vb b \cdot \vb*\sigma)  
	=(\vb a \cdot \vb b)\sfI + i(\vb a\times\vb b) \cdot \vb*\sigma \) and 
\( (\vb a \times \vb b) \cdot (\vb c \times \vb d) = (\vb a \cdot \vb c) (\vb b \cdot \vb d) - (\vb b \cdot \vb c) (\vb a \cdot\vb d)\) for any vectors \(\vb a,\vb b,\vb c, \vb d \) in \(\mathbb R^3\),
we get 
\begin{align}\label{seq:s2}
	\tr[\sfZ (\vb n \cdot \vb*\sigma) \sfZ (\vb n \cdot \vb*\sigma)] 
	&= \tr[((\vb z \cdot \vb n)\sfI + i(\vb z\times\vb n) \cdot \vb*\sigma)^2] \nonumber\\
	&= 2 (\vb z \cdot \vb n)^2 - 2 (\vb z \times \vb n) \cdot (\vb z \times \vb n) \nonumber\\
	&= 4 (\vb z \cdot \vb n)^2 - 2 (\vb z \cdot \vb z).
\end{align}
Since \(\sfP_\pm\) are one-dimensional projection operators, it is easy to get 
\begin{align}
	  \sfP_\pm \sfZ \sfP_\pm 
	= \tr(\sfZ \sfP_\pm) \sfP_\pm
	= \pm(\vb n \cdot \vb z) \sfP_\pm
\end{align}
and thus 
\begin{align}\label{seq:s3}
	 \tr[\sfZ \sfP_\pm \sfZ \sfP_\pm] = (\vb n \cdot \vb z)^2.
\end{align}
Putting Eqs.~\eqref{seq:s0}, \eqref{seq:s1}, \eqref{seq:s2} and \eqref{seq:s3} together, we get 
\begin{align}
	v^\top \curse v &= 4\vb z \cdot \vb z  
	+ \qty(\frac1{\lambda_+\lambda_-} - 4) (\vb n \cdot\vb z)^2 \nonumber\\
	&= 4\vb z \cdot \vb z  
	+ \frac{4}{1-r^2} (\vb r \cdot \vb z)^2. \label{eq:s_two_source}
\end{align}

Comparing the definition \(\sfA_{jk} = \sqrt{w_j w_k} \ip{\psi_j}{\psi_k}\) with 
\begin{equation}
	\sfA = \frac12(\sfI + \vb r \cdot \vb*\sigma) = \frac12 \mqty[
		1 + r_3 & r_1 - i r_2 \\
		r_1 + i r_2 & 1 - r_3
	],
\end{equation}
the coefficients of the Pauli vector can be expressed as 
\(r_1 = 2\sqrt{w_1 w_2} \Re\ip{\psi_2}{\psi_1}\), 
\(r_2 = 2\sqrt{w_1 w_2} \Im\ip{\psi_2}{\psi_1}\),
and \(r_3 = w_1 - w_2\). 
Therefore, we get
\begin{align}
	1-r^2 &= 1 - 4 w_1 w_2 \abs{\ip{\psi_2}{\psi_1}}^2 - (w_1 - w_2)^2 \nonumber\\
	 &= (w_1+w_2)^2 - 4 w_1 w_2 \abs{\ip{\psi_2}{\psi_1}}^2 - (w_1 - w_2)^2  \nonumber \\
	 &= 4 w_1 w_2 \qty(1 - \abs{\ip{\psi_2}{\psi_1}}^2), \label{eq:sl1}
\end{align}
where we have used \(w_1 + w_2 = 1\) in the second equality.
Note that \(\vb z_1 - i \vb z_2 = \sfZ_{12} = \sqrt{w_1 w_2} \gamma\) with
\begin{equation}
	\gamma := i \ip{\psi_1}{D_v \psi_2} - i \ip{D_v \psi_1}{\psi_2}
\end{equation}
being defined.
So we get \(\vb z_1 = \sqrt{w_1 w_2} \Re\gamma\) and \(\vb z_2 = - \sqrt{w_1 w_2} \Im\gamma\).
Thus, we have
\begin{align}
	\vb z \cdot \vb z 
	&= w_1 w_2 |\gamma|^2, \label{eq:sl2}\\
	\vb r \cdot \vb z
	&= 2 w_1 w_2 \qty(\Re\ip{\psi_2}{\psi_1} \Re\gamma - \Im\ip{\psi_2}{\psi_1} \Im\gamma) \nonumber\\
	&= 2 w_1 w_2 \Re \gamma\ip{\psi_2}{\psi_1} \label{eq:sl3}. 
\end{align}
Substituting Eqs.~\eqref{eq:sl1}, \eqref{eq:sl2}, and \eqref{eq:sl3} into Eq.~\eqref{eq:s_two_source}, we get Eq.~\eqref{eq:twosource_f}


\begin{thebibliography}{57}%
\makeatletter
\providecommand \@ifxundefined [1]{%
 \@ifx{#1\undefined}
}%
\providecommand \@ifnum [1]{%
 \ifnum #1\expandafter \@firstoftwo
 \else \expandafter \@secondoftwo
 \fi
}%
\providecommand \@ifx [1]{%
 \ifx #1\expandafter \@firstoftwo
 \else \expandafter \@secondoftwo
 \fi
}%
\providecommand \natexlab [1]{#1}%
\providecommand \enquote  [1]{``#1''}%
\providecommand \bibnamefont  [1]{#1}%
\providecommand \bibfnamefont [1]{#1}%
\providecommand \citenamefont [1]{#1}%
\providecommand \href@noop [0]{\@secondoftwo}%
\providecommand \href [0]{\begingroup \@sanitize@url \@href}%
\providecommand \@href[1]{\@@startlink{#1}\@@href}%
\providecommand \@@href[1]{\endgroup#1\@@endlink}%
\providecommand \@sanitize@url [0]{\catcode `\\12\catcode `\$12\catcode
  `\&12\catcode `\#12\catcode `\^12\catcode `\_12\catcode `\%12\relax}%
\providecommand \@@startlink[1]{}%
\providecommand \@@endlink[0]{}%
\providecommand \url  [0]{\begingroup\@sanitize@url \@url }%
\providecommand \@url [1]{\endgroup\@href {#1}{\urlprefix }}%
\providecommand \urlprefix  [0]{URL }%
\providecommand \Eprint [0]{\href }%
\providecommand \doibase [0]{https://doi.org/}%
\providecommand \selectlanguage [0]{\@gobble}%
\providecommand \bibinfo  [0]{\@secondoftwo}%
\providecommand \bibfield  [0]{\@secondoftwo}%
\providecommand \translation [1]{[#1]}%
\providecommand \BibitemOpen [0]{}%
\providecommand \bibitemStop [0]{}%
\providecommand \bibitemNoStop [0]{.\EOS\space}%
\providecommand \EOS [0]{\spacefactor3000\relax}%
\providecommand \BibitemShut  [1]{\csname bibitem#1\endcsname}%
\let\auto@bib@innerbib\@empty
\bibitem [{\citenamefont {{Lord Rayleigh}}(1879)}]{LordRayleigh1879}%
  \BibitemOpen
  \bibfield  {author} {\bibinfo {author} {\bibfnamefont {F.~R.~S.}\
  \bibnamefont {{Lord Rayleigh}}},\ }\bibfield  {title} {\bibinfo {title}
  {Xxxi. investigations in optics, with special reference to the
  spectroscope},\ }\href {https://doi.org/10.1080/14786447908639684} {\bibfield
   {journal} {\bibinfo  {journal} {Philosophical Magazine Series 5}\ }\textbf
  {\bibinfo {volume} {8}},\ \bibinfo {pages} {261} (\bibinfo {year}
  {1879})}\BibitemShut {NoStop}%
\bibitem [{\citenamefont {Feynman}\ \emph {et~al.}(1963)\citenamefont
  {Feynman}, \citenamefont {Leighton},\ and\ \citenamefont
  {Sands}}]{Feynman1963}%
  \BibitemOpen
  \bibfield  {author} {\bibinfo {author} {\bibfnamefont {R.}~\bibnamefont
  {Feynman}}, \bibinfo {author} {\bibfnamefont {R.}~\bibnamefont {Leighton}},\
  and\ \bibinfo {author} {\bibfnamefont {M.}~\bibnamefont {Sands}},\
  }\href@noop {} {\emph {\bibinfo {title} {The Feynman Lectures on
  Physics:Volume I}}}\ (\bibinfo  {publisher} {Addison-Wesley},\ \bibinfo
  {year} {1963})\BibitemShut {NoStop}%
\bibitem [{\citenamefont {Ram}\ \emph {et~al.}(2006)\citenamefont {Ram},
  \citenamefont {Ward},\ and\ \citenamefont {Ober}}]{Ram2006}%
  \BibitemOpen
  \bibfield  {author} {\bibinfo {author} {\bibfnamefont {S.}~\bibnamefont
  {Ram}}, \bibinfo {author} {\bibfnamefont {E.~S.}\ \bibnamefont {Ward}},\ and\
  \bibinfo {author} {\bibfnamefont {R.~J.}\ \bibnamefont {Ober}},\ }\bibfield
  {title} {\bibinfo {title} {Beyond {Rayleigh}'s criterion: A resolution
  measure with application to single-molecule microscopy},\ }\href
  {https://doi.org/10.1073/pnas.0508047103} {\bibfield  {journal} {\bibinfo
  {journal} {Proc. Natl. Acad. Sci. U.S.A.}\ }\textbf {\bibinfo {volume}
  {103}},\ \bibinfo {pages} {4457} (\bibinfo {year} {2006})}\BibitemShut
  {NoStop}%
\bibitem [{\citenamefont {Chao}\ \emph {et~al.}(2016)\citenamefont {Chao},
  \citenamefont {Ward},\ and\ \citenamefont {Ober}}]{Chao2016}%
  \BibitemOpen
  \bibfield  {author} {\bibinfo {author} {\bibfnamefont {J.}~\bibnamefont
  {Chao}}, \bibinfo {author} {\bibfnamefont {E.~S.}\ \bibnamefont {Ward}},\
  and\ \bibinfo {author} {\bibfnamefont {R.~J.}\ \bibnamefont {Ober}},\
  }\bibfield  {title} {\bibinfo {title} {Fisher information theory for
  parameter estimation in single molecule microscopy: tutorial},\ }\href
  {https://doi.org/10.1364/JOSAA.33.000B36} {\bibfield  {journal} {\bibinfo
  {journal} {J. Opt. Soc. Am. A}\ }\textbf {\bibinfo {volume} {33}},\ \bibinfo
  {pages} {B36} (\bibinfo {year} {2016})}\BibitemShut {NoStop}%
\bibitem [{\citenamefont {Helstrom}(1973)}]{Helstrom1973}%
  \BibitemOpen
  \bibfield  {author} {\bibinfo {author} {\bibfnamefont {C.}~\bibnamefont
  {Helstrom}},\ }\bibfield  {title} {\bibinfo {title} {Resolution of point
  sources of light as analyzed by quantum detection theory},\ }\href
  {https://doi.org/10.1109/TIT.1973.1055052} {\bibfield  {journal} {\bibinfo
  {journal} {IEEE Trans. Inform. Theory}\ }\textbf {\bibinfo {volume} {19}},\
  \bibinfo {pages} {389} (\bibinfo {year} {1973})}\BibitemShut {NoStop}%
\bibitem [{\citenamefont {Harris}(1964)}]{Harris1964}%
  \BibitemOpen
  \bibfield  {author} {\bibinfo {author} {\bibfnamefont {J.~L.}\ \bibnamefont
  {Harris}},\ }\bibfield  {title} {\bibinfo {title} {Resolving power and
  decision theory},\ }\href {https://doi.org/10.1364/JOSA.54.000606} {\bibfield
   {journal} {\bibinfo  {journal} {J. Opt. Soc. Am.}\ }\textbf {\bibinfo
  {volume} {54}},\ \bibinfo {pages} {606} (\bibinfo {year} {1964})}\BibitemShut
  {NoStop}%
\bibitem [{\citenamefont {Acuna}\ and\ \citenamefont
  {Horowitz}(1997)}]{Acuna1997}%
  \BibitemOpen
  \bibfield  {author} {\bibinfo {author} {\bibfnamefont {C.~O.}\ \bibnamefont
  {Acuna}}\ and\ \bibinfo {author} {\bibfnamefont {J.}~\bibnamefont
  {Horowitz}},\ }\bibfield  {title} {\bibinfo {title} {A statistical approach
  to the resolution of point sources},\ }\href
  {https://doi.org/10.1080/02664769723620} {\bibfield  {journal} {\bibinfo
  {journal} {J. Appl. Statist.}\ }\textbf {\bibinfo {volume} {24}},\ \bibinfo
  {pages} {421} (\bibinfo {year} {1997})}\BibitemShut {NoStop}%
\bibitem [{\citenamefont {Shahram}\ and\ \citenamefont
  {Milanfar}(2006)}]{Shahram2006}%
  \BibitemOpen
  \bibfield  {author} {\bibinfo {author} {\bibfnamefont {M.}~\bibnamefont
  {Shahram}}\ and\ \bibinfo {author} {\bibfnamefont {P.}~\bibnamefont
  {Milanfar}},\ }\bibfield  {title} {\bibinfo {title} {Statistical and
  information-theoretic analysis of resolution in imaging},\ }\href
  {https://doi.org/10.1109/TIT.2006.878180} {\bibfield  {journal} {\bibinfo
  {journal} {IEEE Trans. Inform. Theor.}\ }\textbf {\bibinfo {volume} {52}},\
  \bibinfo {pages} {3411} (\bibinfo {year} {2006})}\BibitemShut {NoStop}%
\bibitem [{\citenamefont {Tsang}\ \emph {et~al.}(2016)\citenamefont {Tsang},
  \citenamefont {Nair},\ and\ \citenamefont {Lu}}]{Tsang2016b}%
  \BibitemOpen
  \bibfield  {author} {\bibinfo {author} {\bibfnamefont {M.}~\bibnamefont
  {Tsang}}, \bibinfo {author} {\bibfnamefont {R.}~\bibnamefont {Nair}},\ and\
  \bibinfo {author} {\bibfnamefont {X.-M.}\ \bibnamefont {Lu}},\ }\bibfield
  {title} {\bibinfo {title} {Quantum theory of superresolution for two
  incoherent optical point sources},\ }\href
  {https://doi.org/10.1103/PhysRevX.6.031033} {\bibfield  {journal} {\bibinfo
  {journal} {Phys. Rev. X}\ }\textbf {\bibinfo {volume} {6}},\ \bibinfo {pages}
  {031033} (\bibinfo {year} {2016})}\BibitemShut {NoStop}%
\bibitem [{\citenamefont {Tsang}(2019{\natexlab{a}})}]{Tsang2019a}%
  \BibitemOpen
  \bibfield  {author} {\bibinfo {author} {\bibfnamefont {M.}~\bibnamefont
  {Tsang}},\ }\bibfield  {title} {\bibinfo {title} {Resolving starlight: a
  quantum perspective},\ }\href {https://doi.org/10.1080/00107514.2020.1736375}
  {\bibfield  {journal} {\bibinfo  {journal} {Contemp. Phys.}\ }\textbf
  {\bibinfo {volume} {60}},\ \bibinfo {pages} {279} (\bibinfo {year}
  {2019}{\natexlab{a}})}\BibitemShut {NoStop}%
\bibitem [{\citenamefont {Lu}\ \emph {et~al.}(2018)\citenamefont {Lu},
  \citenamefont {Krovi}, \citenamefont {Nair}, \citenamefont {Guha},\ and\
  \citenamefont {Shapiro}}]{Lu2018}%
  \BibitemOpen
  \bibfield  {author} {\bibinfo {author} {\bibfnamefont {X.-M.}\ \bibnamefont
  {Lu}}, \bibinfo {author} {\bibfnamefont {H.}~\bibnamefont {Krovi}}, \bibinfo
  {author} {\bibfnamefont {R.}~\bibnamefont {Nair}}, \bibinfo {author}
  {\bibfnamefont {S.}~\bibnamefont {Guha}},\ and\ \bibinfo {author}
  {\bibfnamefont {J.~H.}\ \bibnamefont {Shapiro}},\ }\bibfield  {title}
  {\bibinfo {title} {Quantum-optimal detection of one-versus-two incoherent
  optical sources with arbitrary separation},\ }\href
  {https://doi.org/10.1038/s41534-018-0114-y} {\bibfield  {journal} {\bibinfo
  {journal} {npj Quantum Information}\ }\textbf {\bibinfo {volume} {4}},\
  \bibinfo {pages} {64} (\bibinfo {year} {2018})},\ \Eprint
  {https://arxiv.org/abs/1802.02300v1} {1802.02300v1} \BibitemShut {NoStop}%
\bibitem [{\citenamefont {Helstrom}(1976)}]{Helstrom1976}%
  \BibitemOpen
  \bibfield  {author} {\bibinfo {author} {\bibfnamefont {C.~W.}\ \bibnamefont
  {Helstrom}},\ }\href@noop {} {\emph {\bibinfo {title} {Quantum Detection and
  Estimation Theory}}}\ (\bibinfo  {publisher} {Academic Press, New York},\
  \bibinfo {year} {1976})\BibitemShut {NoStop}%
\bibitem [{\citenamefont {Holevo}(1982)}]{Holevo1982}%
  \BibitemOpen
  \bibfield  {author} {\bibinfo {author} {\bibfnamefont {A.~S.}\ \bibnamefont
  {Holevo}},\ }\href@noop {} {\emph {\bibinfo {title} {Probabilistic and
  Statistical Aspects of Quantum Theory}}}\ (\bibinfo  {publisher}
  {North-Holland, Amsterdam},\ \bibinfo {year} {1982})\BibitemShut {NoStop}%
\bibitem [{\citenamefont {Pirandola}\ \emph {et~al.}(2018)\citenamefont
  {Pirandola}, \citenamefont {Bardhan}, \citenamefont {Gehring}, \citenamefont
  {Weedbrook},\ and\ \citenamefont {Lloyd}}]{Pirandola2018}%
  \BibitemOpen
  \bibfield  {author} {\bibinfo {author} {\bibfnamefont {S.}~\bibnamefont
  {Pirandola}}, \bibinfo {author} {\bibfnamefont {B.~R.}\ \bibnamefont
  {Bardhan}}, \bibinfo {author} {\bibfnamefont {T.}~\bibnamefont {Gehring}},
  \bibinfo {author} {\bibfnamefont {C.}~\bibnamefont {Weedbrook}},\ and\
  \bibinfo {author} {\bibfnamefont {S.}~\bibnamefont {Lloyd}},\ }\bibfield
  {title} {\bibinfo {title} {Advances in photonic quantum sensing},\ }\href
  {https://doi.org/10.1038/s41566-018-0301-6} {\bibfield  {journal} {\bibinfo
  {journal} {Nat. Photonics}\ }\textbf {\bibinfo {volume} {12}},\ \bibinfo
  {pages} {724} (\bibinfo {year} {2018})}\BibitemShut {NoStop}%
\bibitem [{\citenamefont {Nair}\ and\ \citenamefont
  {Tsang}(2016{\natexlab{a}})}]{Nair2016}%
  \BibitemOpen
  \bibfield  {author} {\bibinfo {author} {\bibfnamefont {R.}~\bibnamefont
  {Nair}}\ and\ \bibinfo {author} {\bibfnamefont {M.}~\bibnamefont {Tsang}},\
  }\bibfield  {title} {\bibinfo {title} {Interferometric superlocalization of
  two incoherent optical point sources},\ }\href
  {https://doi.org/10.1364/OE.24.003684} {\bibfield  {journal} {\bibinfo
  {journal} {Opt. Express}\ }\textbf {\bibinfo {volume} {24}},\ \bibinfo
  {pages} {3684} (\bibinfo {year} {2016}{\natexlab{a}})}\BibitemShut {NoStop}%
\bibitem [{\citenamefont {Yang}\ \emph {et~al.}(2016)\citenamefont {Yang},
  \citenamefont {Taschilina}, \citenamefont {Moiseev}, \citenamefont {Simon},\
  and\ \citenamefont {Lvovsky}}]{Yang2016}%
  \BibitemOpen
  \bibfield  {author} {\bibinfo {author} {\bibfnamefont {F.}~\bibnamefont
  {Yang}}, \bibinfo {author} {\bibfnamefont {A.}~\bibnamefont {Taschilina}},
  \bibinfo {author} {\bibfnamefont {E.~S.}\ \bibnamefont {Moiseev}}, \bibinfo
  {author} {\bibfnamefont {C.}~\bibnamefont {Simon}},\ and\ \bibinfo {author}
  {\bibfnamefont {A.~I.}\ \bibnamefont {Lvovsky}},\ }\bibfield  {title}
  {\bibinfo {title} {Far-field linear optical superresolution via heterodyne
  detection in a higher-order local oscillator mode},\ }\href
  {https://doi.org/10.1364/OPTICA.3.001148} {\bibfield  {journal} {\bibinfo
  {journal} {Optica}\ }\textbf {\bibinfo {volume} {3}},\ \bibinfo {pages}
  {1148} (\bibinfo {year} {2016})}\BibitemShut {NoStop}%
\bibitem [{\citenamefont {Pa\'{u}r}\ \emph {et~al.}(2016)\citenamefont
  {Pa\'{u}r}, \citenamefont {Stoklasa}, \citenamefont {Hradil}, \citenamefont
  {S\'{a}nchez-Soto},\ and\ \citenamefont {Rehacek}}]{Paur2016}%
  \BibitemOpen
  \bibfield  {author} {\bibinfo {author} {\bibfnamefont {M.}~\bibnamefont
  {Pa\'{u}r}}, \bibinfo {author} {\bibfnamefont {B.}~\bibnamefont {Stoklasa}},
  \bibinfo {author} {\bibfnamefont {Z.}~\bibnamefont {Hradil}}, \bibinfo
  {author} {\bibfnamefont {L.~L.}\ \bibnamefont {S\'{a}nchez-Soto}},\ and\
  \bibinfo {author} {\bibfnamefont {J.}~\bibnamefont {Rehacek}},\ }\bibfield
  {title} {\bibinfo {title} {Achieving the ultimate optical resolution},\
  }\href {https://doi.org/10.1364/OPTICA.3.001144} {\bibfield  {journal}
  {\bibinfo  {journal} {Optica}\ }\textbf {\bibinfo {volume} {3}},\ \bibinfo
  {pages} {1144} (\bibinfo {year} {2016})}\BibitemShut {NoStop}%
\bibitem [{\citenamefont {Tham}\ \emph {et~al.}(2017)\citenamefont {Tham},
  \citenamefont {Ferretti},\ and\ \citenamefont {Steinberg}}]{Tham2017}%
  \BibitemOpen
  \bibfield  {author} {\bibinfo {author} {\bibfnamefont {W.-K.}\ \bibnamefont
  {Tham}}, \bibinfo {author} {\bibfnamefont {H.}~\bibnamefont {Ferretti}},\
  and\ \bibinfo {author} {\bibfnamefont {A.~M.}\ \bibnamefont {Steinberg}},\
  }\bibfield  {title} {\bibinfo {title} {Beating {Rayleigh}'s curse by imaging
  using phase information},\ }\href
  {https://doi.org/10.1103/PhysRevLett.118.070801} {\bibfield  {journal}
  {\bibinfo  {journal} {Phys. Rev. Lett.}\ }\textbf {\bibinfo {volume} {118}},\
  \bibinfo {pages} {070801} (\bibinfo {year} {2017})}\BibitemShut {NoStop}%
\bibitem [{\citenamefont {Nair}\ and\ \citenamefont
  {Tsang}(2016{\natexlab{b}})}]{Nair2016a}%
  \BibitemOpen
  \bibfield  {author} {\bibinfo {author} {\bibfnamefont {R.}~\bibnamefont
  {Nair}}\ and\ \bibinfo {author} {\bibfnamefont {M.}~\bibnamefont {Tsang}},\
  }\bibfield  {title} {\bibinfo {title} {Far-field superresolution of thermal
  electromagnetic sources at the quantum limit},\ }\href
  {https://doi.org/10.1103/PhysRevLett.117.190801} {\bibfield  {journal}
  {\bibinfo  {journal} {Phys. Rev. Lett.}\ }\textbf {\bibinfo {volume} {117}},\
  \bibinfo {pages} {190801} (\bibinfo {year} {2016}{\natexlab{b}})}\BibitemShut
  {NoStop}%
\bibitem [{\citenamefont {Lupo}\ and\ \citenamefont
  {Pirandola}(2016)}]{Lupo2016}%
  \BibitemOpen
  \bibfield  {author} {\bibinfo {author} {\bibfnamefont {C.}~\bibnamefont
  {Lupo}}\ and\ \bibinfo {author} {\bibfnamefont {S.}~\bibnamefont
  {Pirandola}},\ }\bibfield  {title} {\bibinfo {title} {Ultimate precision
  bound of quantum and subwavelength imaging},\ }\href
  {https://doi.org/10.1103/PhysRevLett.117.190802} {\bibfield  {journal}
  {\bibinfo  {journal} {Phys. Rev. Lett.}\ }\textbf {\bibinfo {volume} {117}},\
  \bibinfo {pages} {190802} (\bibinfo {year} {2016})}\BibitemShut {NoStop}%
\bibitem [{\citenamefont {Rehacek}\ \emph {et~al.}(2017)\citenamefont
  {Rehacek}, \citenamefont {Pa\'{u}r}, \citenamefont {Stoklasa}, \citenamefont
  {Hradil},\ and\ \citenamefont {S\'{a}nchez-Soto}}]{Rehacek2017a}%
  \BibitemOpen
  \bibfield  {author} {\bibinfo {author} {\bibfnamefont {J.}~\bibnamefont
  {Rehacek}}, \bibinfo {author} {\bibfnamefont {M.}~\bibnamefont {Pa\'{u}r}},
  \bibinfo {author} {\bibfnamefont {B.}~\bibnamefont {Stoklasa}}, \bibinfo
  {author} {\bibfnamefont {Z.}~\bibnamefont {Hradil}},\ and\ \bibinfo {author}
  {\bibfnamefont {L.~L.}\ \bibnamefont {S\'{a}nchez-Soto}},\ }\bibfield
  {title} {\bibinfo {title} {Optimal measurements for resolution beyond the
  rayleigh limit},\ }\href {https://doi.org/10.1364/OL.42.000231} {\bibfield
  {journal} {\bibinfo  {journal} {Opt. Lett.}\ }\textbf {\bibinfo {volume}
  {42}},\ \bibinfo {pages} {231} (\bibinfo {year} {2017})}\BibitemShut
  {NoStop}%
\bibitem [{\citenamefont {\v{R}eha\v{c}ek}\ \emph {et~al.}(2017)\citenamefont
  {\v{R}eha\v{c}ek}, \citenamefont {Hradil}, \citenamefont {Stoklasa},
  \citenamefont {Pa\'ur}, \citenamefont {Grover}, \citenamefont {Krzic},\ and\
  \citenamefont {S\'anchez-Soto}}]{Rehacek2017}%
  \BibitemOpen
  \bibfield  {author} {\bibinfo {author} {\bibfnamefont {J.}~\bibnamefont
  {\v{R}eha\v{c}ek}}, \bibinfo {author} {\bibfnamefont {Z.}~\bibnamefont
  {Hradil}}, \bibinfo {author} {\bibfnamefont {B.}~\bibnamefont {Stoklasa}},
  \bibinfo {author} {\bibfnamefont {M.}~\bibnamefont {Pa\'ur}}, \bibinfo
  {author} {\bibfnamefont {J.}~\bibnamefont {Grover}}, \bibinfo {author}
  {\bibfnamefont {A.}~\bibnamefont {Krzic}},\ and\ \bibinfo {author}
  {\bibfnamefont {L.~L.}\ \bibnamefont {S\'anchez-Soto}},\ }\bibfield  {title}
  {\bibinfo {title} {Multiparameter quantum metrology of incoherent point
  sources: Towards realistic superresolution},\ }\href
  {https://doi.org/10.1103/PhysRevA.96.062107} {\bibfield  {journal} {\bibinfo
  {journal} {Phys. Rev. A}\ }\textbf {\bibinfo {volume} {96}},\ \bibinfo
  {pages} {062107} (\bibinfo {year} {2017})}\BibitemShut {NoStop}%
\bibitem [{\citenamefont {Ang}\ \emph {et~al.}(2017)\citenamefont {Ang},
  \citenamefont {Nair},\ and\ \citenamefont {Tsang}}]{Ang2017}%
  \BibitemOpen
  \bibfield  {author} {\bibinfo {author} {\bibfnamefont {S.~Z.}\ \bibnamefont
  {Ang}}, \bibinfo {author} {\bibfnamefont {R.}~\bibnamefont {Nair}},\ and\
  \bibinfo {author} {\bibfnamefont {M.}~\bibnamefont {Tsang}},\ }\bibfield
  {title} {\bibinfo {title} {Quantum limit for two-dimensional resolution of
  two incoherent optical point sources},\ }\href
  {https://doi.org/10.1103/PhysRevA.95.063847} {\bibfield  {journal} {\bibinfo
  {journal} {Phys. Rev. A}\ }\textbf {\bibinfo {volume} {95}},\ \bibinfo
  {pages} {063847} (\bibinfo {year} {2017})}\BibitemShut {NoStop}%
\bibitem [{\citenamefont {Chrostowski}\ \emph {et~al.}(2017)\citenamefont
  {Chrostowski}, \citenamefont {Demkowicz-Dobrzański}, \citenamefont
  {Jarzyna},\ and\ \citenamefont {Banaszek}}]{Chrostowski2017}%
  \BibitemOpen
  \bibfield  {author} {\bibinfo {author} {\bibfnamefont {A.}~\bibnamefont
  {Chrostowski}}, \bibinfo {author} {\bibfnamefont {R.}~\bibnamefont
  {Demkowicz-Dobrzański}}, \bibinfo {author} {\bibfnamefont {M.}~\bibnamefont
  {Jarzyna}},\ and\ \bibinfo {author} {\bibfnamefont {K.}~\bibnamefont
  {Banaszek}},\ }\bibfield  {title} {\bibinfo {title} {On super-resolution
  imaging as a multiparameter estimation problem},\ }\href
  {https://doi.org/10.1142/S0219749917400056} {\bibfield  {journal} {\bibinfo
  {journal} {International Journal of Quantum Information}\ }\textbf {\bibinfo
  {volume} {15}},\ \bibinfo {pages} {1740005} (\bibinfo {year}
  {2017})}\BibitemShut {NoStop}%
\bibitem [{\citenamefont {Dutton}\ \emph {et~al.}(2019)\citenamefont {Dutton},
  \citenamefont {Kerviche}, \citenamefont {Ashok},\ and\ \citenamefont
  {Guha}}]{Dutton2019}%
  \BibitemOpen
  \bibfield  {author} {\bibinfo {author} {\bibfnamefont {Z.}~\bibnamefont
  {Dutton}}, \bibinfo {author} {\bibfnamefont {R.}~\bibnamefont {Kerviche}},
  \bibinfo {author} {\bibfnamefont {A.}~\bibnamefont {Ashok}},\ and\ \bibinfo
  {author} {\bibfnamefont {S.}~\bibnamefont {Guha}},\ }\bibfield  {title}
  {\bibinfo {title} {Attaining the quantum limit of superresolution in imaging
  an object's length via predetection spatial-mode sorting},\ }\href
  {https://doi.org/10.1103/PhysRevA.99.033847} {\bibfield  {journal} {\bibinfo
  {journal} {Phys. Rev. A}\ }\textbf {\bibinfo {volume} {99}},\ \bibinfo
  {pages} {033847} (\bibinfo {year} {2019})}\BibitemShut {NoStop}%
\bibitem [{\citenamefont {Tsang}(2019{\natexlab{b}})}]{Tsang2019}%
  \BibitemOpen
  \bibfield  {author} {\bibinfo {author} {\bibfnamefont {M.}~\bibnamefont
  {Tsang}},\ }\bibfield  {title} {\bibinfo {title} {Quantum limit to
  subdiffraction incoherent optical imaging},\ }\href
  {https://doi.org/10.1103/PhysRevA.99.012305} {\bibfield  {journal} {\bibinfo
  {journal} {Phys. Rev. A}\ }\textbf {\bibinfo {volume} {99}},\ \bibinfo
  {pages} {012305} (\bibinfo {year} {2019}{\natexlab{b}})}\BibitemShut
  {NoStop}%
\bibitem [{\citenamefont {Zhou}\ and\ \citenamefont {Jiang}(2019)}]{Zhou2019}%
  \BibitemOpen
  \bibfield  {author} {\bibinfo {author} {\bibfnamefont {S.}~\bibnamefont
  {Zhou}}\ and\ \bibinfo {author} {\bibfnamefont {L.}~\bibnamefont {Jiang}},\
  }\bibfield  {title} {\bibinfo {title} {Modern description of {Rayleigh}'s
  criterion},\ }\href {https://doi.org/10.1103/PhysRevA.99.013808} {\bibfield
  {journal} {\bibinfo  {journal} {Phys. Rev. A}\ }\textbf {\bibinfo {volume}
  {99}},\ \bibinfo {pages} {013808} (\bibinfo {year} {2019})}\BibitemShut
  {NoStop}%
\bibitem [{\citenamefont {Tsang}(2019{\natexlab{c}})}]{Tsang2019d}%
  \BibitemOpen
  \bibfield  {author} {\bibinfo {author} {\bibfnamefont {M.}~\bibnamefont
  {Tsang}},\ }\bibfield  {title} {\bibinfo {title} {Semiparametric estimation
  for incoherent optical imaging},\ }\href
  {https://doi.org/10.1103/PhysRevResearch.1.033006} {\bibfield  {journal}
  {\bibinfo  {journal} {Phys. Rev. Research}\ }\textbf {\bibinfo {volume}
  {1}},\ \bibinfo {pages} {033006} (\bibinfo {year}
  {2019}{\natexlab{c}})}\BibitemShut {NoStop}%
\bibitem [{\citenamefont {\v{R}eh\'a\v{c}ek}\ \emph {et~al.}(2019)\citenamefont
  {\v{R}eh\'a\v{c}ek}, \citenamefont {Pa\'ur}, \citenamefont {Stoklasa},
  \citenamefont {Koutn\'y}, \citenamefont {Hradil},\ and\ \citenamefont
  {S\'anchez-Soto}}]{Rehacek2019}%
  \BibitemOpen
  \bibfield  {author} {\bibinfo {author} {\bibfnamefont {J.}~\bibnamefont
  {\v{R}eh\'a\v{c}ek}}, \bibinfo {author} {\bibfnamefont {M.}~\bibnamefont
  {Pa\'ur}}, \bibinfo {author} {\bibfnamefont {B.}~\bibnamefont {Stoklasa}},
  \bibinfo {author} {\bibfnamefont {D.}~\bibnamefont {Koutn\'y}}, \bibinfo
  {author} {\bibfnamefont {Z.}~\bibnamefont {Hradil}},\ and\ \bibinfo {author}
  {\bibfnamefont {L.~L.}\ \bibnamefont {S\'anchez-Soto}},\ }\bibfield  {title}
  {\bibinfo {title} {Intensity-based axial localization at the quantum limit},\
  }\href {https://doi.org/10.1103/PhysRevLett.123.193601} {\bibfield  {journal}
  {\bibinfo  {journal} {Phys. Rev. Lett.}\ }\textbf {\bibinfo {volume} {123}},\
  \bibinfo {pages} {193601} (\bibinfo {year} {2019})}\BibitemShut {NoStop}%
\bibitem [{\citenamefont {Bonsma-Fisher}\ \emph {et~al.}(2019)\citenamefont
  {Bonsma-Fisher}, \citenamefont {Tham}, \citenamefont {Ferretti},\ and\
  \citenamefont {Steinberg}}]{Bonsma-Fisher2019}%
  \BibitemOpen
  \bibfield  {author} {\bibinfo {author} {\bibfnamefont {K.~A.~G.}\
  \bibnamefont {Bonsma-Fisher}}, \bibinfo {author} {\bibfnamefont {W.-K.}\
  \bibnamefont {Tham}}, \bibinfo {author} {\bibfnamefont {H.}~\bibnamefont
  {Ferretti}},\ and\ \bibinfo {author} {\bibfnamefont {A.~M.}\ \bibnamefont
  {Steinberg}},\ }\bibfield  {title} {\bibinfo {title} {Realistic sub-rayleigh
  imaging with phase-sensitive measurements},\ }\href
  {https://doi.org/10.1088/1367-2630/ab3d97} {\bibfield  {journal} {\bibinfo
  {journal} {New J. Phys.}\ }\textbf {\bibinfo {volume} {21}},\ \bibinfo
  {pages} {093010} (\bibinfo {year} {2019})}\BibitemShut {NoStop}%
\bibitem [{\citenamefont {Napoli}\ \emph {et~al.}(2019)\citenamefont {Napoli},
  \citenamefont {Piano}, \citenamefont {Leach}, \citenamefont {Adesso},\ and\
  \citenamefont {Tufarelli}}]{Napoli2019}%
  \BibitemOpen
  \bibfield  {author} {\bibinfo {author} {\bibfnamefont {C.}~\bibnamefont
  {Napoli}}, \bibinfo {author} {\bibfnamefont {S.}~\bibnamefont {Piano}},
  \bibinfo {author} {\bibfnamefont {R.}~\bibnamefont {Leach}}, \bibinfo
  {author} {\bibfnamefont {G.}~\bibnamefont {Adesso}},\ and\ \bibinfo {author}
  {\bibfnamefont {T.}~\bibnamefont {Tufarelli}},\ }\bibfield  {title} {\bibinfo
  {title} {Towards superresolution surface metrology: Quantum estimation of
  angular and axial separations},\ }\href
  {https://doi.org/10.1103/PhysRevLett.122.140505} {\bibfield  {journal}
  {\bibinfo  {journal} {Phys. Rev. Lett.}\ }\textbf {\bibinfo {volume} {122}},\
  \bibinfo {pages} {140505} (\bibinfo {year} {2019})}\BibitemShut {NoStop}%
\bibitem [{\citenamefont {Len}\ \emph {et~al.}(2020)\citenamefont {Len},
  \citenamefont {Datta}, \citenamefont {Parniak},\ and\ \citenamefont
  {Banaszek}}]{Len2020}%
  \BibitemOpen
  \bibfield  {author} {\bibinfo {author} {\bibfnamefont {Y.~L.}\ \bibnamefont
  {Len}}, \bibinfo {author} {\bibfnamefont {C.}~\bibnamefont {Datta}}, \bibinfo
  {author} {\bibfnamefont {M.}~\bibnamefont {Parniak}},\ and\ \bibinfo {author}
  {\bibfnamefont {K.}~\bibnamefont {Banaszek}},\ }\bibfield  {title} {\bibinfo
  {title} {Resolution limits of spatial mode demultiplexing with noisy
  detection},\ }\href {https://doi.org/10.1142/S0219749919410156} {\bibfield
  {journal} {\bibinfo  {journal} {International Journal of Quantum
  Information}\ }\textbf {\bibinfo {volume} {18}},\ \bibinfo {pages} {1941015}
  (\bibinfo {year} {2020})}\BibitemShut {NoStop}%
\bibitem [{\citenamefont {Datta}\ \emph {et~al.}()\citenamefont {Datta},
  \citenamefont {Jarzyna}, \citenamefont {Len}, \citenamefont {Łukanowski},
  \citenamefont {Kołodyński},\ and\ \citenamefont {Banaszek}}]{Datta2020}%
  \BibitemOpen
  \bibfield  {author} {\bibinfo {author} {\bibfnamefont {C.}~\bibnamefont
  {Datta}}, \bibinfo {author} {\bibfnamefont {M.}~\bibnamefont {Jarzyna}},
  \bibinfo {author} {\bibfnamefont {Y.~L.}\ \bibnamefont {Len}}, \bibinfo
  {author} {\bibfnamefont {K.}~\bibnamefont {Łukanowski}}, \bibinfo {author}
  {\bibfnamefont {J.}~\bibnamefont {Kołodyński}},\ and\ \bibinfo {author}
  {\bibfnamefont {K.}~\bibnamefont {Banaszek}},\ }\bibfield  {title} {\bibinfo
  {title} {Sub-rayleigh resolution of incoherent sources by array homodyning},\
  }\href@noop {} {\ }\Eprint {https://arxiv.org/abs/2005.08693v1}
  {arXiv:2005.08693v1} \BibitemShut {NoStop}%
\bibitem [{\citenamefont {Liu}\ \emph {et~al.}(2020)\citenamefont {Liu},
  \citenamefont {Yuan}, \citenamefont {Lu},\ and\ \citenamefont
  {Wang}}]{Liu2020}%
  \BibitemOpen
  \bibfield  {author} {\bibinfo {author} {\bibfnamefont {J.}~\bibnamefont
  {Liu}}, \bibinfo {author} {\bibfnamefont {H.}~\bibnamefont {Yuan}}, \bibinfo
  {author} {\bibfnamefont {X.-M.}\ \bibnamefont {Lu}},\ and\ \bibinfo {author}
  {\bibfnamefont {X.}~\bibnamefont {Wang}},\ }\bibfield  {title} {\bibinfo
  {title} {Quantum {Fisher} information matrix and multiparameter estimation},\
  }\href {https://doi.org/10.1088/1751-8121/ab5d4d} {\bibfield  {journal}
  {\bibinfo  {journal} {J. Phys. A: Math. Theor.}\ }\textbf {\bibinfo {volume}
  {53}},\ \bibinfo {pages} {023001} (\bibinfo {year} {2020})}\BibitemShut
  {NoStop}%
\bibitem [{\citenamefont {Helstrom}(1967)}]{Helstrom1967}%
  \BibitemOpen
  \bibfield  {author} {\bibinfo {author} {\bibfnamefont {C.}~\bibnamefont
  {Helstrom}},\ }\bibfield  {title} {\bibinfo {title} {Minimum mean-squared
  error of estimates in quantum statistics},\ }\href
  {https://doi.org/http://dx.doi.org/10.1016/0375-9601(67)90366-0} {\bibfield
  {journal} {\bibinfo  {journal} {Phys. Lett. A}\ }\textbf {\bibinfo {volume}
  {25}},\ \bibinfo {pages} {101} (\bibinfo {year} {1967})}\BibitemShut
  {NoStop}%
\bibitem [{\citenamefont {Helstrom}(1968)}]{Helstrom1968}%
  \BibitemOpen
  \bibfield  {author} {\bibinfo {author} {\bibfnamefont {C.}~\bibnamefont
  {Helstrom}},\ }\bibfield  {title} {\bibinfo {title} {The minimum variance of
  estimates in quantum signal detection},\ }\href
  {https://doi.org/10.1109/TIT.1968.1054108} {\bibfield  {journal} {\bibinfo
  {journal} {IEEE Trans. Inform. Theory}\ }\textbf {\bibinfo {volume} {14}},\
  \bibinfo {pages} {234} (\bibinfo {year} {1968})}\BibitemShut {NoStop}%
\bibitem [{\citenamefont {Paris}(2009)}]{Paris2009}%
  \BibitemOpen
  \bibfield  {author} {\bibinfo {author} {\bibfnamefont {M.~G.~A.}\
  \bibnamefont {Paris}},\ }\bibfield  {title} {\bibinfo {title} {Quantum
  estimation for quantum technology},\ }\href
  {https://doi.org/10.1142/S0219749909004839} {\bibfield  {journal} {\bibinfo
  {journal} {Int. J. Quantum Inf.}\ }\textbf {\bibinfo {volume} {7}},\ \bibinfo
  {pages} {125} (\bibinfo {year} {2009})}\BibitemShut {NoStop}%
\bibitem [{\citenamefont {Belavkin}(1976)}]{Belavkin1976}%
  \BibitemOpen
  \bibfield  {author} {\bibinfo {author} {\bibfnamefont {V.~P.}\ \bibnamefont
  {Belavkin}},\ }\bibfield  {title} {\bibinfo {title} {Generalized uncertainty
  relations and efficient measurements in quantum systems},\ }\href
  {https://doi.org/10.1007/BF01032091} {\bibfield  {journal} {\bibinfo
  {journal} {Theor. Math. Phys.}\ }\textbf {\bibinfo {volume} {26}},\ \bibinfo
  {pages} {213} (\bibinfo {year} {1976})}\BibitemShut {NoStop}%
\bibitem [{\citenamefont {Lu}\ and\ \citenamefont {Wang}()}]{Lu2020b}%
  \BibitemOpen
  \bibfield  {author} {\bibinfo {author} {\bibfnamefont {X.-M.}\ \bibnamefont
  {Lu}}\ and\ \bibinfo {author} {\bibfnamefont {X.}~\bibnamefont {Wang}},\
  }\bibfield  {title} {\bibinfo {title} {Incorporating {Heisenberg's}
  uncertainty principle into quantum multiparameter estimation},\ }\href@noop
  {} {\ }\Eprint {https://arxiv.org/abs/2008.08888} {arXiv:2008.08888}
  \BibitemShut {NoStop}%
\bibitem [{\citenamefont {Gross}\ and\ \citenamefont
  {Caves}(2020)}]{Gross2020}%
  \BibitemOpen
  \bibfield  {author} {\bibinfo {author} {\bibfnamefont {J.~A.}\ \bibnamefont
  {Gross}}\ and\ \bibinfo {author} {\bibfnamefont {C.~M.}\ \bibnamefont
  {Caves}},\ }\bibfield  {title} {\bibinfo {title} {One from many: Estimating a
  function of many parameters},\ }\href
  {https://doi.org/10.1088/1751-8121/abb9ed} {\bibfield  {journal} {\bibinfo
  {journal} {J. Phys. A: Math. Theor.}\ }\textbf {\bibinfo {volume} {54}},\
  \bibinfo {pages} {014001} (\bibinfo {year} {2020})}\BibitemShut {NoStop}%
\bibitem [{\citenamefont {Xing}\ and\ \citenamefont {Fu}(2020)}]{Xing2020}%
  \BibitemOpen
  \bibfield  {author} {\bibinfo {author} {\bibfnamefont {H.}~\bibnamefont
  {Xing}}\ and\ \bibinfo {author} {\bibfnamefont {L.}~\bibnamefont {Fu}},\
  }\bibfield  {title} {\bibinfo {title} {Measure of the density of quantum
  states in information geometry and quantum multiparameter estimation},\
  }\href {https://doi.org/10.1103/PhysRevA.102.062613} {\bibfield  {journal}
  {\bibinfo  {journal} {Phys. Rev. A}\ }\textbf {\bibinfo {volume} {102}},\
  \bibinfo {pages} {062613} (\bibinfo {year} {2020})}\BibitemShut {NoStop}%
\bibitem [{\citenamefont {Carollo}\ \emph {et~al.}(2019)\citenamefont
  {Carollo}, \citenamefont {Spagnolo}, \citenamefont {Dubkov},\ and\
  \citenamefont {Valenti}}]{Carollo2019a}%
  \BibitemOpen
  \bibfield  {author} {\bibinfo {author} {\bibfnamefont {A.}~\bibnamefont
  {Carollo}}, \bibinfo {author} {\bibfnamefont {B.}~\bibnamefont {Spagnolo}},
  \bibinfo {author} {\bibfnamefont {A.~A.}\ \bibnamefont {Dubkov}},\ and\
  \bibinfo {author} {\bibfnamefont {D.}~\bibnamefont {Valenti}},\ }\bibfield
  {title} {\bibinfo {title} {On quantumness in multi-parameter quantum
  estimation},\ }\href {https://doi.org/10.1088/1742-5468/ab3ccb} {\bibfield
  {journal} {\bibinfo  {journal} {J. Stat. Mech: Theory Exp.}\ }\textbf
  {\bibinfo {volume} {2019}},\ \bibinfo {pages} {094010} (\bibinfo {year}
  {2019})}\BibitemShut {NoStop}%
\bibitem [{\citenamefont {Carollo}\ \emph {et~al.}(2020)\citenamefont
  {Carollo}, \citenamefont {Valenti},\ and\ \citenamefont
  {Spagnolo}}]{Carollo2020}%
  \BibitemOpen
  \bibfield  {author} {\bibinfo {author} {\bibfnamefont {A.}~\bibnamefont
  {Carollo}}, \bibinfo {author} {\bibfnamefont {D.}~\bibnamefont {Valenti}},\
  and\ \bibinfo {author} {\bibfnamefont {B.}~\bibnamefont {Spagnolo}},\
  }\bibfield  {title} {\bibinfo {title} {Geometry of quantum phase
  transitions},\ }\href
  {https://doi.org/https://doi.org/10.1016/j.physrep.2019.11.002} {\bibfield
  {journal} {\bibinfo  {journal} {Phys. Rep.}\ }\textbf {\bibinfo {volume}
  {838}},\ \bibinfo {pages} {1} (\bibinfo {year} {2020})}\BibitemShut {NoStop}%
\bibitem [{\citenamefont {Miyazaki}()}]{Miyazaki2020}%
  \BibitemOpen
  \bibfield  {author} {\bibinfo {author} {\bibfnamefont {J.}~\bibnamefont
  {Miyazaki}},\ }\bibfield  {title} {\bibinfo {title} {Symmetry in metrology:
  Universal reduction of quantum multi-parameter incompatibility},\ }\href@noop
  {} {\ }\Eprint {https://arxiv.org/abs/2010.15465} {arXiv:2010.15465}
  \BibitemShut {NoStop}%
\bibitem [{\citenamefont {Friendly}\ \emph {et~al.}(2013)\citenamefont
  {Friendly}, \citenamefont {Monette},\ and\ \citenamefont
  {Fox}}]{Friendly2013}%
  \BibitemOpen
  \bibfield  {author} {\bibinfo {author} {\bibfnamefont {M.}~\bibnamefont
  {Friendly}}, \bibinfo {author} {\bibfnamefont {G.}~\bibnamefont {Monette}},\
  and\ \bibinfo {author} {\bibfnamefont {J.}~\bibnamefont {Fox}},\ }\bibfield
  {title} {\bibinfo {title} {Elliptical insights: Understanding statistical
  methods through elliptical geometry},\ }\href
  {https://doi.org/10.1214/12-STS402} {\bibfield  {journal} {\bibinfo
  {journal} {Statist. Sci.}\ }\textbf {\bibinfo {volume} {28}},\ \bibinfo
  {pages} {1} (\bibinfo {year} {2013})}\BibitemShut {NoStop}%
\bibitem [{\citenamefont {Petz}(1996)}]{Petz1996}%
  \BibitemOpen
  \bibfield  {author} {\bibinfo {author} {\bibfnamefont {D.}~\bibnamefont
  {Petz}},\ }\bibfield  {title} {\bibinfo {title} {Monotone metrics on matrix
  spaces},\ }\href {https://doi.org/10.1016/0024-3795(94)00211-8} {\bibfield
  {journal} {\bibinfo  {journal} {Linear Algebra Appl.}\ }\textbf {\bibinfo
  {volume} {244}},\ \bibinfo {pages} {81 } (\bibinfo {year}
  {1996})}\BibitemShut {NoStop}%
\bibitem [{\citenamefont {Hiai}\ and\ \citenamefont {Petz}(2014)}]{Hiai2014}%
  \BibitemOpen
  \bibfield  {author} {\bibinfo {author} {\bibfnamefont {F.}~\bibnamefont
  {Hiai}}\ and\ \bibinfo {author} {\bibfnamefont {D.}~\bibnamefont {Petz}},\
  }\href
  {http://gen.lib.rus.ec/book/index.php?md5=a8386c8bf1ffa93cea435a0a2c8ed31b}
  {\emph {\bibinfo {title} {Introduction to Matrix Analysis and
  Applications}}},\ \bibinfo {edition} {1st}\ ed.,\ Universitext\ (\bibinfo
  {publisher} {Springer International Publishing, Cham},\ \bibinfo {year}
  {2014})\BibitemShut {NoStop}%
\bibitem [{\citenamefont {Lu}\ \emph {et~al.}(2015)\citenamefont {Lu},
  \citenamefont {Yu},\ and\ \citenamefont {Oh}}]{Lu2015}%
  \BibitemOpen
  \bibfield  {author} {\bibinfo {author} {\bibfnamefont {X.-M.}\ \bibnamefont
  {Lu}}, \bibinfo {author} {\bibfnamefont {S.}~\bibnamefont {Yu}},\ and\
  \bibinfo {author} {\bibfnamefont {C.~H.}\ \bibnamefont {Oh}},\ }\bibfield
  {title} {\bibinfo {title} {Robust quantum metrological schemes based on
  protection of quantum {Fisher} information},\ }\href
  {https://doi.org/10.1038/ncomms8282} {\bibfield  {journal} {\bibinfo
  {journal} {Nat. Commun.}\ }\textbf {\bibinfo {volume} {6}},\ \bibinfo {pages}
  {7282} (\bibinfo {year} {2015})}\BibitemShut {NoStop}%
\bibitem [{\citenamefont {Lupo}\ \emph {et~al.}(2020)\citenamefont {Lupo},
  \citenamefont {Huang},\ and\ \citenamefont {Kok}}]{Lupo2020a}%
  \BibitemOpen
  \bibfield  {author} {\bibinfo {author} {\bibfnamefont {C.}~\bibnamefont
  {Lupo}}, \bibinfo {author} {\bibfnamefont {Z.}~\bibnamefont {Huang}},\ and\
  \bibinfo {author} {\bibfnamefont {P.}~\bibnamefont {Kok}},\ }\bibfield
  {title} {\bibinfo {title} {Quantum limits to incoherent imaging are achieved
  by linear interferometry},\ }\href
  {https://doi.org/10.1103/PhysRevLett.124.080503} {\bibfield  {journal}
  {\bibinfo  {journal} {Phys. Rev. Lett.}\ }\textbf {\bibinfo {volume} {124}},\
  \bibinfo {pages} {080503} (\bibinfo {year} {2020})}\BibitemShut {NoStop}%
\bibitem [{\citenamefont {Fiderer}\ \emph {et~al.}(2020)\citenamefont
  {Fiderer}, \citenamefont {Tufarelli}, \citenamefont {Piano},\ and\
  \citenamefont {Adesso}}]{Fiderer2020}%
  \BibitemOpen
  \bibfield  {author} {\bibinfo {author} {\bibfnamefont {L.~J.}\ \bibnamefont
  {Fiderer}}, \bibinfo {author} {\bibfnamefont {T.}~\bibnamefont {Tufarelli}},
  \bibinfo {author} {\bibfnamefont {S.}~\bibnamefont {Piano}},\ and\ \bibinfo
  {author} {\bibfnamefont {G.}~\bibnamefont {Adesso}},\ }\bibfield  {title}
  {\bibinfo {title} {General expressions for the quantum {Fisher} information
  matrix with applications to discrete quantum imaging},\ }\href@noop {} {\
  (\bibinfo {year} {2020})},\ \Eprint {https://arxiv.org/abs/2012.01572}
  {arXiv:2012.01572 [quant-ph]} \BibitemShut {NoStop}%
\bibitem [{\citenamefont {Genoni}\ and\ \citenamefont
  {Tufarelli}(2019)}]{Genoni2019}%
  \BibitemOpen
  \bibfield  {author} {\bibinfo {author} {\bibfnamefont {M.~G.}\ \bibnamefont
  {Genoni}}\ and\ \bibinfo {author} {\bibfnamefont {T.}~\bibnamefont
  {Tufarelli}},\ }\bibfield  {title} {\bibinfo {title} {Non-orthogonal bases
  for quantum metrology},\ }\href {https://doi.org/10.1088/1751-8121/ab3fe0}
  {\bibfield  {journal} {\bibinfo  {journal} {Journal of Physics A:
  Mathematical and Theoretical}\ }\textbf {\bibinfo {volume} {52}},\ \bibinfo
  {pages} {434002} (\bibinfo {year} {2019})}\BibitemShut {NoStop}%
\bibitem [{\citenamefont {Bisketzi}\ \emph {et~al.}(2019)\citenamefont
  {Bisketzi}, \citenamefont {Branford},\ and\ \citenamefont
  {Datta}}]{Bisketzi2019}%
  \BibitemOpen
  \bibfield  {author} {\bibinfo {author} {\bibfnamefont {E.}~\bibnamefont
  {Bisketzi}}, \bibinfo {author} {\bibfnamefont {D.}~\bibnamefont {Branford}},\
  and\ \bibinfo {author} {\bibfnamefont {A.}~\bibnamefont {Datta}},\ }\bibfield
   {title} {\bibinfo {title} {Quantum limits of localisation microscopy},\
  }\href {https://doi.org/10.1088/1367-2630/ab58a0} {\bibfield  {journal}
  {\bibinfo  {journal} {New Journal of Physics}\ }\textbf {\bibinfo {volume}
  {21}},\ \bibinfo {pages} {123032} (\bibinfo {year} {2019})}\BibitemShut
  {NoStop}%
\bibitem [{\citenamefont {Matsumoto}(2002)}]{Matsumoto2002}%
  \BibitemOpen
  \bibfield  {author} {\bibinfo {author} {\bibfnamefont {K.}~\bibnamefont
  {Matsumoto}},\ }\bibfield  {title} {\bibinfo {title} {A new approach to the
  {Cram\'er-Rao}-type bound of the pure-state model},\ }\href
  {http://stacks.iop.org/0305-4470/35/i=13/a=307} {\bibfield  {journal}
  {\bibinfo  {journal} {J. Phys. A: Math. Gen.}\ }\textbf {\bibinfo {volume}
  {35}},\ \bibinfo {pages} {3111} (\bibinfo {year} {2002})}\BibitemShut
  {NoStop}%
\bibitem [{\citenamefont {Ragy}\ \emph {et~al.}(2016)\citenamefont {Ragy},
  \citenamefont {Jarzyna},\ and\ \citenamefont
  {Demkowicz-Dobrza\'{n}ski}}]{Ragy2016}%
  \BibitemOpen
  \bibfield  {author} {\bibinfo {author} {\bibfnamefont {S.}~\bibnamefont
  {Ragy}}, \bibinfo {author} {\bibfnamefont {M.}~\bibnamefont {Jarzyna}},\ and\
  \bibinfo {author} {\bibfnamefont {R.}~\bibnamefont
  {Demkowicz-Dobrza\'{n}ski}},\ }\bibfield  {title} {\bibinfo {title}
  {Compatibility in multiparameter quantum metrology},\ }\href
  {https://doi.org/10.1103/PhysRevA.94.052108} {\bibfield  {journal} {\bibinfo
  {journal} {Phys. Rev. A}\ }\textbf {\bibinfo {volume} {94}},\ \bibinfo
  {pages} {052108} (\bibinfo {year} {2016})}\BibitemShut {NoStop}%
\bibitem [{\citenamefont {Soriano}\ and\ \citenamefont
  {Palacios}(2014)}]{Soriano2014}%
  \BibitemOpen
  \bibfield  {author} {\bibinfo {author} {\bibfnamefont {M.}~\bibnamefont
  {Soriano}}\ and\ \bibinfo {author} {\bibfnamefont {J.~J.}\ \bibnamefont
  {Palacios}},\ }\bibfield  {title} {\bibinfo {title} {Theory of projections
  with nonorthogonal basis sets: Partitioning techniques and effective
  hamiltonians},\ }\href {https://doi.org/10.1103/PhysRevB.90.075128}
  {\bibfield  {journal} {\bibinfo  {journal} {Phys. Rev. B}\ }\textbf {\bibinfo
  {volume} {90}},\ \bibinfo {pages} {075128} (\bibinfo {year}
  {2014})}\BibitemShut {NoStop}%
\bibitem [{\citenamefont {Uhlmann}(2011)}]{Uhlmann2011}%
  \BibitemOpen
  \bibfield  {author} {\bibinfo {author} {\bibfnamefont {A.}~\bibnamefont
  {Uhlmann}},\ }\bibfield  {title} {\bibinfo {title} {Transition probability
  (fidelity) and its relatives},\ }\href@noop {} {\bibfield  {journal}
  {\bibinfo  {journal} {Found. Phys.}\ }\textbf {\bibinfo {volume} {41}},\
  \bibinfo {pages} {288} (\bibinfo {year} {2011})}\BibitemShut {NoStop}%
\bibitem [{\citenamefont {Carollo}\ \emph {et~al.}(2018)\citenamefont
  {Carollo}, \citenamefont {Spagnolo},\ and\ \citenamefont
  {Valenti}}]{Carollo2018}%
  \BibitemOpen
  \bibfield  {author} {\bibinfo {author} {\bibfnamefont {A.}~\bibnamefont
  {Carollo}}, \bibinfo {author} {\bibfnamefont {B.}~\bibnamefont {Spagnolo}},\
  and\ \bibinfo {author} {\bibfnamefont {D.}~\bibnamefont {Valenti}},\
  }\bibfield  {title} {\bibinfo {title} {Uhlmann curvature in dissipative phase
  transitions},\ }\href {https://doi.org/10.1038/s41598-018-27362-9} {\bibfield
   {journal} {\bibinfo  {journal} {Scientific Reports}\ }\textbf {\bibinfo
  {volume} {8}},\ \bibinfo {pages} {9852} (\bibinfo {year} {2018})}\BibitemShut
  {NoStop}%
\end{thebibliography}
\end{document}